\title{Towards Practical Robustness Auditing for Linear Regression}
\author{Daniel Freund\\ MIT \and Samuel B. Hopkins \\ MIT}
\begin{document}

\maketitle

\thispagestyle{empty}

\begin{abstract}
    We investigate practical algorithms to find or disprove the existence of small subsets of a dataset which, when removed, reverse the sign of a coefficient in an ordinary least squares regression involving that dataset.
    We empirically study the performance of well-established algorithmic techniques for this task -- mixed integer quadratically constrained optimization for general linear regression problems and exact greedy methods for special cases.
    We show that these methods largely outperform the state of the art \cite{broderick2020automatic,moitra2022provably,kuschnig2021hidden} and provide a useful robustness check for regression problems in a few dimensions.
    However, significant computational bottlenecks remain, especially for the important task of disproving the existence of such small sets of influential samples for regression problems of dimension $3$ or greater.
    We make some headway on this challenge via a spectral algorithm using ideas drawn from recent innovations in algorithmic robust statistics.
    We summarize the limitations of known techniques in several challenge datasets to encourage further algorithmic innovation.
\end{abstract}

\newpage

\thispagestyle{empty}
\tableofcontents

\newpage

\setcounter{page}{1}

\section{Introduction}
Recently, Broderick, Giordano, and Meager \cite{broderick2020automatic} identified a striking pattern of non-robustness in several high-quality and large-scale econometric studies, ranging from the effects of healthcare enrollment (Oregon Medicaid Study \cite{finkelstein2012oregon}) to the effects of microcredit in developing economies \cite{angelucci2015microcredit, augsburg2015impacts,attanasio2015impacts,banerjee2015miracle,crepon2015estimating,karlan2011microcredit,tarozzi2015impacts}.
Key conclusions of the studies -- e.g. that the effect of a treatment on some outcome is positive and statistically significant -- change when the statistical analyses are re-run with a small but carefully-chosen subset of the sample is removed.
Often, dropping less than $1\%$ of the observations, or even just a single one, suffices.
This phenomenon appears even when the authors of the original studies have run careful robustness checks and removed outliers,
and it appears in the simplest settings, such as regression of a single binary treatment variable against a single real-valued outcome \cite{broderick2020automatic}.
Subsequent work by Kuschnig, Zens, and Crespo Cuaresma \cite{kuschnig2021hidden} reinforces this theme.

This leads to significant concerns about generalization and replicability.
Large studies necessarily use data collection methods which deviate from ``truly'' random sampling -- they use surveys and public records, they impute missing data, etc.
Although study designers devote significant effort to correcting for these imperfections e.g., by re-weighing subpopulations, such correction schemes may still yield samples whose distribution deviates by a few percent from ``truly random'' draws from the underlying population.
Furthermore, even the most careful randomized controlled trials may in other ways draw samples from a population which differs from the population to which the conclusions purportedly apply -- for example, a drug trial conducted in Boston whose results will be used to support nation-wide use of a drug, or the impact of a policy trialed in one country that may be used to inform its use in another.

A false discovery arising because of a few-percent deviation between the sample and underlying population would induce a small, highly influential set of samples -- thus, finding such small influential sets is a potential avenue to detecting such false discoveries. 
This is not the only reason a small influential set could arise, however -- another possibility is that the sample is indeed reflective of the population, and the effect detected in the sample is driven by a small segment of the population.
In this case, too, the small influential set (or knowledge that there is none) should be of significant interest to the researcher.
For further statistical interpretation of small influential sets of samples, see
\cite{broderick2020automatic,kuschnig2021hidden,moitra2022provably} for more discussion.

\paragraph{Robustness Auditing}
Assuming that we are interested in knowing whether the conclusion of a study is non-robust to the removal of a few samples, how can we find the offending samples?
Conversely, how can we be sure that such samples do \emph{not} exist?
Following Moitra and Rohatgi \cite{moitra2022provably}, we call this type of algorithmic task \emph{robustness auditing}.
We focus on least-squares linear regression, due to its simplicity and widespread use.

Concretely, we study algorithms to compute or approximate the following quantity, given a dataset $(X,y) = (X_1,y_1),\ldots,(X_n,y_n) \in \R^{d+1}$ and a coordinate $i \in [d]$:
\[
  \text{Stability}(X,y) = \min \, |S| \, \text{ such that } \, \text{sign}(\beta^{OLS, \, [n] \setminus S}_i) \neq \text{sign}(\beta^{OLS, [n]}_i)
\]
where $\beta^{OLS, \, [n]\setminus S}_i$ denotes the $i$-th coordinate of the ordinary least squares regression vector using the dataset $\{(X_i,y_i)\}_{[n] \setminus S}$.
A naive algorithm to decide whether $\text{Stability}(X,y) \leq t$ is to run one instance of linear regression for each $S \subseteq [n]$ with $|S| \leq t$. 
But this is computationally intractable even for moderate values of $n$ and $t$.
Moitra and Rohatgi \cite{moitra2022provably} show that this intractability is, to some extent, inherent: under standard computational complexity hypotheses, any algorithm that provably computes $\text{Stability}(X,y)$\footnote{Moitra and Rohatgi actually study a \emph{fractional} variant of Stability, where samples can be re-weighted instead of completely removed.} for any $d$-dimensional $X,y$ has worst-case running time at least $n^{\Omega(d)}$.

However, worst-case computational intractability does not preclude the existence of algorithms which compute or approximate Stability well for many instances of linear regression encountered in practice.
In this work we design several algorithms which can perform two related tasks: 
\begin{enumerate}
    \item produce a small set $S$, thus providing an \emph{upper bound} on $\text{Stability}(X,y)$, and
    \item provide a \emph{lower bound} on $\text{Stability}(X,y)$ which is valid on a per-dataset basis -- i.e. which holds without any statistical assumptions on $X,y$.
\end{enumerate}
We highlight that lower bounds on Stability are crucial for robustness auditing.
Optimization methods that approximate $\text{Stability}(X,y)$ by finding some subset of samples $S$ whose removal changes $\text{sign}(\beta_i)$ with no guarantee that it is the smallest one 
provide only upper bounds.
This is because given such a subset, there is no way to know if an even smaller set might exist, meaning that algorithms which provide only upper bounds on Stability have limited utility for checking robustness.

\paragraph{Our Contributions}
We design and implement several algorithms for robustness auditing and study their performance empirically on a range of regression problems drawn from recent publications in economics and political science, as well as standard testbed datasets such as Boston Housing \cite{harrison1978hedonic}.
We focus largely on simple and well-studied algorithmic ideas, to lay a foundation for future innovation by establishing performance baselines.
We provide implementations of these algorithms in an accompanying Python software package, \texttt{auditor\_tools}.\footnote{\url{https://github.com/df365/robustness_auditing/tree/main}}

Our main thesis is that standard algorithmic techniques actually provide significantly better than state-of-the-art performance for robustness auditing.
Moreover, the performance of some of these approaches is good enough for mainstream adoption, at least in simple-enough regression problems.
In more detail, we study:
\begin{itemize}
    \item An mixed integer quadratically constrained optimization approach to approximate (or sometimes exactly compute) Stability, using off-the-shelf optimization software (Gurobi \cite{achterberg2020s}). 
    On laptop hardware, this approach scales to regression problems with $10-100$ dimensions and $\approx 10^4$ samples, or $100-1000$ dimensions and $\approx 1000$ samples.
    It provides matching upper and lower bounds on Stability for numerous regression problems drawn from recent publications and standard testbed datasets where prior methods either provide no lower bound or have large gaps between upper and lower bounds.

    \item Efficient greedy algorithms to compute Stability exactly in two simple but common special cases of linear regression: regression of a real-valued outcome against a binary treatment variable and difference-in-difference estimation.
    These algorithms only apply to these special cases but always provide matching upper and lower bounds on Stability and easily scale to datasets with millions of samples.
\end{itemize}

Using these algorithms, we audit several linear regression datasets drawn from prominent econometric studies which prior work investigated using the algorithm of \cite{broderick2020automatic}, ZAMinfluence. We find in these examples that ZAMinfluence frequently fails to find the smallest possible set of samples $S$ to change the sign of an OLS regression coordinate. (Moitra and Rohatgi observe a similar phenomenon in the Boston Housing dataset \cite{moitra2022provably}.)

In some cases \cite{martinez2022much,eubank2022enfranchisement}, published in leading journals in economics and political science, the authors report the number of samples returned by ZAMinfluence as the minimum needed to change the outcome of their study, as evidence of robustness of their results.
That is, they treat the ZAMinfluence \emph{upper bound} on Stability as if it were a \emph{lower bound}, although ZAMinfluence comes with no such guarantee.
We invalidate such claims by finding smaller subsets, thus  highlighting the importance of lower bounds on Stability.
On datasets of small dimension (two or three), our algorithms frequently provide matching lower and upper bounds.

Overall, our results suggest that greedy methods and mixed integer quadratically constrained optimization offer a useful approach to robustness auditing for many regression problems encountered in practice.
However, there is room for improvement, especially on regression problems with more than two or three dimensions: on many such datasets drawn from econometric studies, none of our algorithms, or those in prior work, provide any nontrivial lower bounds on stability (in a reasonable amount of computation time).

In fact, it is easy to construct simple synthetic datasets with $100$ samples in four or more dimensions with Gaussian covariates for which no prior algorithm, nor any of the above, offer any nontrivial lower bound on Stability (in a reasonable amount of computation time).
Following recent theoretical developments in algorithmic robust statistics \cite{klivans2018efficient,bakshi2021robust}, our third contribution shows empirically that this is not due to inherent computational intractability.
We implement:
\begin{itemize}
    \item A spectral algorithm (i.e, based on eigenvalues and eigenvectors) which gives nontrivial lower bounds on Stability for synthetic datasets with tens of thousands of samples and four (or more) dimensions. In these settings, we did not plant any outliers and consequently the resulting regressions are expected to be stable under the removal of a sizable fraction of the samples.  Nonetheless, the mixed integer optimization approach provides no lower bound greater 0,  whereas \cite{moitra2022provably} does not run (in reasonable time) even in dimension 4.
    This shows that there is room for improvement beyond baseline approaches for robustness auditing. However, as our spectral algorithm is heavily tailored to synethic datasets, it does not improve over our baseline methods on any of the real-world data we study.
\end{itemize}

To encourage future work, we summarize the limitations of the algorithms we study in several ``challenge datasets,''.
These are datasets where our algorithms and those of prior work leave large gaps between upper and lower bounds on Stability; new algorithms which shrink or close these gaps would thus represent progress on robustness auditing for linear regression.

\subsection{Related Work}
We are aware of three prior works which attempt to compute or approximate $\text{Stability}(X,y)$.

\paragraph{ZAMinfluence and refinements} ZAMinfluence \cite{broderick2020automatic} 
finds small subsets of samples to drop based on the classical notion of influence functions.\footnote{Essentially, it removes those samples which have the greatest effect on the fitted parameter of interest when they are infinitesimally down-weighted, as measured by differentiation with respect to the weight assigned to that sample.}
It is computationally lightweight and applicable well beyond linear regression.
Its authors demonstrate that it finds small, high-influence subsets in several large-scale econometric studies.
ZAMinfluence has already seen significant adoption: since its initial release 2021, several studies in economics, finance, and political science have used it to perform robustness checks, ensuring that it \emph{doesn't} find a small subset of samples which can be dropped to change the study outcome \cite{eubank2022enfranchisement,martinez2022much,finger2022adoption,turnbull2022mobilising,falck2022systematic}.

However, ZAMinfluence only provides upper bounds on $\text{Stability}(X,y)$.
Moitra and Rohatgi show in the context of the Boston Housing dataset that these upper bounds frequently are not tight.
We show in this work that this non-tightness extends to multiple cases where ZAMinfluence has been used as a (purported) robustness check.

Kuschnig, Zens, and Crespo Cuaresma \cite{kuschnig2021hidden} experiment with several refinements of ZAMinfluence, mainly involving removing the most influential sample one at a time and then recomputing all influences.
This amounts to a greedy heuristic for approximating Stability.
They show that this heuristic finds better upper bounds on Stability than ZAMinfluence does in several examples.

\paragraph{\textsc{PartitionAndApprox} and \textsc{NetApprox}}
Moitra and Rohatgi \cite{moitra2022provably} propose two algorithms to approximate a fractional variant of $\text{Stability}(X,y)$ (meaning they search for a set of $[0,1]$-valued \emph{weights} rather than a subset $S$, and consider the resulting weighted OLS solution).
They prove strong theoretical guarantees for their algorithms -- in particular, under relatively weak assumptions on $X,y$, they can compute the fractional stability, up to error $\eps n$, in time roughly $(n/\eps)^{d + O(1)}$.

Moitra and Rohatgi implement modified variants of \textsc{PartitionAndApprox} and \textsc{NetApprox} for which their provable guarantees no longer apply, but which still give, for any given $X,y$, valid upper and lower bounds on the fractional stability.
They demonstrate that these lower bounds are nontrivial -- for a majority of two-dimensional regression problems drawn from the Boston Housing dataset \cite{harrison1978hedonic} their upper and lower bounds are within a factor of two.

\textsc{PartitionAndApprox} and \textsc{NetApprox} thus offer a potentially useful robustness audit, but from a practical standpoint there are still significant drawbacks.
First, their running times scales poorly with dimension.
(Indeed, \cite{moitra2022provably} show that this is inherent for any algorithm which provably computes Stability for any $X,y$.)
Consequently, Moitra and Rohatgi do not obtain nontrivial stability bounds on any regression problem of dimension larger than three.
Second, their upper and lower bounds are still far from tight for the majority of regression problems they draw from Boston Housing (e.g., their bounds are not within 1\% of each other on 92\% of the instances). 
By contrast, we solve $94\%$ of these problems with gaps of less than $1\%$.

\paragraph{Additional Prior Work}
Measures of the influence of individual samples on a linear regression have been extensively studied in statistics.
This literature is too broad to fully survey here; see e.g. \cite{chatterjee1986influential} and references therein for discussion of classical literature.
Determining what a fitted model would do in the absence of a subset of data has also been of recent interest in machine learning \cite{ilyas2022datamodels,yang2023many}.


\subsection{Results}
We now briefly summarize the upper and lower bounds we obtain on Stability for real-world and synthetic datasets, and to what extent these improve on prior work.
We report full results in Section~\ref{sec:experiments}, where we run every algorithm that we study on every dataset, except where (a) the algorithm only works on a special case of linear regression which the dataset doesn't fit, or (b) the algorithm requires too much time to return results (see discussion in Section ~\ref{ssec:setup}) , as for \textsc{PartitionAndApprox}/\textsc{NetApprox} on datasets of dimension $3$ or larger.

\paragraph{Microcredit}
Meager \cite{meager2022aggregating} surveys seven randomized control trials involving availability of microcredit loans in developing countries.
Each involves a single regression of one binary treatment variable and a real-valued outcome, typically with thousands or tens of thousands of samples.
They are among the original datasets investigated using ZAMinfluence \cite{broderick2020automatic}.
Both our Gurobi-based approach and a simple greedy algorithm exactly solve Stability for all these studies (that is, they obtain matching lower and upper bounds).
In several cases our upper bounds improve on those obtained via ZAMinfluence,  and we provide the first lower bounds for these datasets.

\paragraph{Incarceration}
Eubank and Fresh \cite{eubank2022enfranchisement} investigate the effect of the end of Jim Crow on incarceration rates of Black people in the American South.
The resulting linear regression is 48-dimensional with 504 samples.
Using ZAMinfluence, Eubank and Fresh report that at least 19\% of their data would need to be removed to change the outcome of their study.
Our Gurobi-based method identifies a subset of $< 6\%$ of the data which has this effect.
None of our algorithms find any nontrivial lower bound on Stability for these data.

\paragraph{GDP and Democracy}
Martinez \cite{martinez2022much} investigates the effect of political freedom on national reports of economic growth.
The resulting linear regression is 211-dimensional, with 3895 samples.
\cite{martinez2022much} reports that ZAMinfluence needs to remove at least $5\%$ of the data to change the study's outcome; the authors run some heuristic tests to try and see if this $5\%$ can be reduced to $1\%$ and report that it likely cannot.
However, using Gurobi we find a subset of $\approx 3\%$ of the sample which can be removed to change the outcome of the study.
None of our algoirthms provide any nontrival lower bound on Stability for these data.

\paragraph{Boston Housing}
Boston Housing \cite{harrison1978hedonic} is a standard benchmark dataset for machine learning.\footnote{Some ethical issues surrounding the Boston Housing data have emerged in recent years \cite{fairlearn_boston_housing_data}. We report no conclusions or predictions made using these data, only Stability of some regression problems drawn from them, in order to compare our algorithms with \cite{moitra2022provably}.}
Moitra and Rohatgi test \textsc{PartitionAndApprox} and \textsc{NetApprox} on numerous regression problems with two-dimensional covariates drawn from Boston Housing.
On nearly all of these regression problems, Gurobi finds upper and lower bounds on Stability with significantly smaller gap (typically less than $1\%$).
On a few examples, Moitra and Rohatgi's algorithm obtains tighter bounds; in practice one could run both algorithms and report the tighter bound.

\paragraph{Minimum Wage}
Card and Krueger \cite{card1993minimum} study the effect of minimum wage on fast-food employment.
The resulting data and analysis have become a textbook example of the difference-in-differences method (indeed, our analysis is based on a CSV prepared by \cite{bauer2020causal}).
We design a simple greedy method which can exactly compute Stability for such difference-in-differences regressions; it scales easily to the 384 observation pairs in Card and Krueger's dataset.

\paragraph{Synthetic}
We expose an Achilles heel of all previously-discussed algorithms for robustness auditing of linear regression: none can provide nontrivial lower bounds on Stability for simple synthetic datasets with very robust linear trends.
Concretely, we generate $1000$ i.i.d. samples $X_1,\ldots,X_{1000} \in \R^4$ from $\mathcal{N}(0,I)$ and we let $Y_i = \beta^\top X_i + \eps_i$, where $\eps_i \sim \mathcal{N}(0,1)$ and $\beta = (1,1,1,1)$.
Using Gurobi we can identify a subset of ?? samples to remove, but we obtain no nontrivial lower bound.
However, our spectral algorithm supplies a lower bound showing that at least $10 \%$ of the data must be removed to change the sign of $\beta_1$.
\section{Our Algorithms}\label{sec:algos}

Now we give formal descriptions of the algorithms we study.
We defer to \cite{moitra2022provably} for descriptions of ZAMinfluence, \textsc{PartitionAndApprox}, and \textsc{NetApprox}.

\subsection{Mathematical Programs for Robustness Auditing}\label{ssec:gurobi}
Moitra and Rohatgi observe that a fractional version of robustness auditing for linear regression with $n$ samples in $d$ dimensions can be cast as a mathematical program with $n$ fractional $[0,1]$ variables, $d$ real-valued variables, bilinear constraints which are linear in the binary and real-valued variables respectively, and a linear objective function.  Effectively, they consider a version of the problem in which weights do not have to be binary but may instead be chosen as fractional values and the last OLS coefficient $\beta_d$ is constrained to be 0, and write it as follows

\begin{equation*}
\begin{aligned}
&&&n\quad-\quad\max_{\boldsymbol{\beta}\in \mathrm{R}^{d-1}, \boldsymbol{w}\in[0,1]^n} \quad |\boldsymbol{w}|_1  \\
&&&\text{s.t. } 
\sum_{i=1}^nw_iX_{i,j'}\left(\sum_{j=1}^{d-1} X_{i,j}\beta_j - Y_i\right) = 0 \quad \forall j'\in[d]
\end{aligned}
\end{equation*}

The $d$ bilinear constraints together enforce that the gradient of the OLS squared-error is $0$ at $\beta$ for the regression instance specified by the weights $w$ -- i.e, that $\beta$ is an OLS regressor for this weighted regression problem.
The constraints implicitly ensure that $\beta_d$, the last coefficient of $\beta$, is $0$, because the residual term $\sum_{j=1}^{d-1} X_{i,j} \beta_j - Y_i$ does not include any $\beta_d$ term.

The previous mathematical program solves a fractionally relaxed version of the stability problem. In this relaxed variant it is guaranteed that an optimal solution would set $\beta_d=0$.  For the integral problem we need to include an explicit variable for $\beta_d$; we can then write
\begin{equation*}
\begin{aligned}
&&&n\quad-\quad\max_{\boldsymbol{\beta}\in \mathrm{R}^{d}, \boldsymbol{w}\in\{0,1\}^n, \boldsymbol{r}\in \mathrm{R}^n} \quad |\boldsymbol{w}|_1  \\
&&&\text{s.t. } 
\sum_{i=1}^nw_iX_{i,j'}\left(\sum_{j=1}^d X_{i,j}\beta_j - Y_i\right) = 0 \quad \forall j'\in[d]\\ 
&&&
\qquad\beta_d\leq 0\\
\end{aligned}
\end{equation*}

Though these problems are nonconvex, with the terms $w_i\beta_j$ appearing in the constraints, they can be solved using exact solver methods, including those supported from Gurobi 9.0 onwards \cite{achterberg2020s}. In particular, these methods apply a globally optimal spatial branch-and-bound method which recursively partitions the feasible region into subdomains and invokes McCormick inequalities to obtain lower and upper bounds within each subdomain. We refer the reader to \cite{belotti2013mixed} for an overview of the general theory underlying these methods.

\paragraph{Implementation details.} 
Given unconstrained runtime, Gurobi is guaranteed to solve both the fractional and the integral quadratically constrained optimization problems. In practice, we find on all of our instances that Gurobi identifies good solutions much quicker for the fractional problem (for low-dimensional problems this entails provably small error, for high-dimensional problems the best heuristic solutions we can identify).  In all of our instances the fractional solution can be easily rounded to an integral one by just rounding every weight that is strictly smaller than 1 (with some numerical tolerance) to 0. We then run OLS on the subset of samples given by the rounded weights to confirm that the result has a negative final coefficient $\beta_d$. Alternatively, one can warm-start Gurobi on the integer-constraint instance using the rounded weights; however, we have found no instances where this gives improved solutions.

Directly running Gurobi on the integer-constrained instance, without a warm start obtained by rounding a fractional solution, often shows significantly worse performance (e.g., on Eubank and Fresh's data we can identify an upper bound of 28, by rounding the fractional solution, within seconds, whereas the integer-constrained optimization takes more than 30 minutes to identify an upper bound of 187).

On one of the Microcredit instances \cite{angelucci2015microcredit}, we identified an idiosyncratic behavior of the Gurobi solver: it returns an incorrect (claimed optimal) upper bound of $|\boldsymbol{w}|_1=0$ when solving the fractional problem. However, with the added constraint $|\boldsymbol{w}|_1\geq1$, Gurobi solves the fractional instance optimally. That constraint affects the performance on other instances, so we only include it when not including it leads to an incorrect upper bound of 0.


    

\subsection{Nearly-Linear Time Algorithm for Single Binary Treatment Variable}\label{ssec:binary_treatment}
For the simplest regression problems, with a single binary treatment variable and a real-valued outcome, we show that Stability is computable in time $O(n \log n)$.
The algorithm below assumes that the OLS regression for the input $(X,Y)$ has positive slope; otherwise replace each $Y_i$ by $-Y_i$.

The simple insight behind the algorithm is that if each $X_i \in \{0,1\}$ and we commit to removing $k \leq n$ samples, then the best subset of samples to remove to minimize the slope of the regression line consists of samples with $X_i = 0$ and minimum $Y_i$s, and samples with $X_i = 1$ and maximum $Y_i$s.

\begin{algorithm}
\caption{Exact Algorithm For Auditing Binary 2D Regression}
\label{alg:binary}
\begin{algorithmic}[1]
\Procedure{RobustnessAuditBinary}{$X,Y$}
    \State $n \gets \text{length}(X)$
    \State Let $Y^i = \{Y_j | X_j = i\}$ for $i \in \{0, 1\}$, sort $Y^0$ in decreasing, $Y^1$ in increasing order
    \State Set  $S^0_\ell$ as cumulative sums of the first $\ell$ terms of  $Y^0$ for $\ell\in 1,\ldots,|Y^0|$
    \State Set  $S^1_\ell$ as cumulative sums of the first $\ell$ terms of  $Y^1$ for $\ell\in 1,\ldots,|Y^1|$

    \State $lower, upper\gets 0, n$ \Comment{$lower$ is too small and $upper$ is sufficient to flip sign}
    \While{\textsc{True}}
        \State $\textsc{Flag}\gets\textsc{False}$; \quad $k\gets \lfloor (lower+upper)/2\rfloor$
        \For{$\ell \gets \max\{0,n-k-|Y^1|\}$ to $\min\{|Y^0|,n-k\}$} \Comment{Iterate over number of 0s to drop}
        \State \textbf{if} $-(n-k-\ell)*S^0_\ell + \ell* S^1_{n-k-\ell}\leq 0$ \textbf{do} $\textsc{Flag}\gets\textsc{True}$
        \EndFor
        \State \textbf{if} $lower=k=upper-1$ and \textsc{Flag}: \quad \Return $k$
        \State \textbf{if} $lower=k=upper-1$ and \textsc{not Flag}: \quad \Return $k+1$
        \State \textbf{if} $\textsc{Flag}$: $upper\gets k$;\quad \textbf{else}: $lower\gets k$
    \EndWhile
\EndProcedure
\end{algorithmic}
\end{algorithm}

We capture correctness of Algorithm~\ref{alg:binary} in the following theorem.

\begin{theorem}
  Given $X_1,\ldots,X_n \in \{0,1\}$ and $Y_1,\ldots,Y_n \in \R$, Algorithm~\ref{alg:binary} outputs $\text{Stability}(X,Y)$ in time $O(n \log n)$.
\end{theorem}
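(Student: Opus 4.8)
The plan is to reduce the audit to a one-dimensional question about group means, show that the greedy choice baked into Algorithm~\ref{alg:binary} is optimal for every candidate budget $k$, and conclude that the binary search locates $\text{Stability}(X,Y)$ exactly.

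First I would record the standard identity that for a binary covariate the OLS slope equals the difference of conditional means. Writing $n_b = |\{i : X_i = b\}|$ and $\bar Y^b = \tfrac1{n_b}\sum_{i : X_i = b} Y_i$ for $b \in \{0,1\}$, a one-line computation of $\mathrm{Cov}(X,Y)/\mathrm{Var}(X)$ gives $\beta_1^{OLS} = \bar Y^1 - \bar Y^0$. Hence, after the normalization that makes the original slope positive (i.e.\ $\bar Y^1 > \bar Y^0$), a subset $S$ reverses the sign precisely when the surviving data satisfy $\bar Y^1_{[n]\setminus S} \le \bar Y^0_{[n]\setminus S}$; the non-strict inequality is correct because $\mathrm{sign}(0) \ne \mathrm{sign}(+)$, so driving the slope to exactly $0$ already counts as a sign change. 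I would work under the convention that $S$ does not delete an entire treatment group (otherwise the design is rank-deficient and one must fix a convention for the least-squares solution); the two endpoint values of the inner loop over $\ell$ correspond exactly to this degenerate situation and are handled by that convention.

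Next I would prove the two lemmas that make the inner loop correct. \emph{Exchange:} fix a budget $k$ and a split into $k_0$ deleted $0$-samples and $k_1 = k-k_0$ deleted $1$-samples; among all such deletions, $\bar Y^0_{[n]\setminus S}$ is maximized by retaining the $n_0-k_0$ \emph{largest} values in the $0$-group and $\bar Y^1_{[n]\setminus S}$ is minimized by retaining the $n_1-k_1$ \emph{smallest} values in the $1$-group — the usual single-swap argument, since swapping a retained value for a larger discarded one in the $0$-group strictly increases $\bar Y^0$, and symmetrically for the $1$-group. Thus, with $\ell := n_0-k_0$ the number of retained $0$-samples (so $n-k-\ell$ is the number of retained $1$-samples), the smallest value of $\bar Y^1-\bar Y^0$ attainable with budget $k$ under this split is $\tfrac{S^1_{n-k-\ell}}{\,n-k-\ell\,} - \tfrac{S^0_\ell}{\ell}$, where $S^0_\ell$ (resp.\ $S^1_m$) is the sum of the $\ell$ largest $0$-values (resp.\ $m$ smallest $1$-values); clearing the positive denominators reproduces exactly the comparison performed inside the \textbf{for} loop. \emph{Monotonicity:} if budget $k$ suffices, then deleting one further sample — the current maximum of the retained $1$-group, or the minimum of the retained $0$-group, whichever group still has at least two elements — does not increase $\bar Y^1 - \bar Y^0$ (each such step only shrinks $\bar Y^1$ or grows $\bar Y^0$, by a one-line inequality), so budget $k+1$ also suffices. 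Consequently $\{k : \text{budget } k \text{ suffices}\}$ is an up-set whose minimum equals $\text{Stability}(X,Y)$, and ``budget $k$ suffices'' holds iff some $\ell$ in the scanned range passes the test.

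Finally I would assemble the pieces. The binary search maintains the invariant that $lower$ never suffices and $upper$ always does (initialized correctly at $0$ and $n$), halves the gap each round, and by the up-set structure the two return lines output the true threshold once $lower = upper-1$. For the running time, sorting $Y^0,Y^1$ costs $O(n\log n)$, building the prefix sums $S^0,S^1$ costs $O(n)$, and each of the $O(\log n)$ binary-search rounds runs the inner loop over at most $n+1$ values of $\ell$ with $O(1)$ work per value thanks to the prefix sums, for $O(n\log n)$ overall. The step I expect to require the most care is the boundary bookkeeping: verifying that the inner-loop endpoints enumerate precisely the feasible $(k_0,k_1)$ splits keeping both groups nonempty, that monotonicity is not broken when some retained group is down to a single element, and that the $\le 0$ (rather than $<0$) comparison is the correct rendering of ``$\mathrm{sign}$ changes'' — all routine, but each must be stated explicitly to close the argument.
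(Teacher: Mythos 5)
Your proposal is correct and follows essentially the same route as the paper's proof: reduce the slope's sign to the comparison of group means $\bar Y^1 - \bar Y^0$ (the paper derives this via the explicit $2\times 2$ inverse, you cite the standard conditional-means identity), establish the greedy/exchange step for each fixed split of the deletion budget, prove monotonicity in $k$ by removing the extremal element of one group, and combine with binary search over prefix sums for the $O(n\log n)$ bound. Your explicit attention to the degenerate case where a group is emptied and to the $\le 0$ versus $<0$ convention is slightly more careful than the paper's treatment, but it is the same argument.
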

\begin{proof}
  To prove correctness of Algorithm~\ref{alg:binary}, we need to show two things.
  First, to correctly apply binary search, we need to show a monotonicity property: if there is a subset of $k$ samples which we can remove to change the OLS slope then for every $k' > k$ there is also such a subset of $k'$ samples.
  Second, we need to show correctness of the greedy step: if there is a subset of $k$ samples which change the sign of the slope of the OLS regression line when removed, then there exists such a subset which, for some $\ell \leq k$, removes the $\ell$ samples such that $X_i = 0$ with least $Y_i$ and the $k-\ell$ samples with $X_i = 1$ and greatest $Y_i$.

  For both these goals, let $S_1 = \{i \in [n] \, : \, X_i = 1\}$ and consider some subset $U \subseteq [n]$ with $|U| = n-k$.
  We derive an explicit formula for the slope of the OLS line on the dataset $\{(X_i,Y_i)\}_{i \in U}$.
  \begin{align*}
  \beta & = \Paren{\sum_{i \in U} (1,X_i) (1,X_i)^\top}^{-1} \sum_{i \in U} (1,X_i) \cdot Y_i \\
  & = \Paren{
  \begin{matrix} |U| & |U \cap S_1| \\
  |U \cap S_1| & |U \cap S_1|
  \end{matrix} }^{-1} \cdot \Paren{\begin{matrix} \sum_{i \in U} Y_i \\ \sum_{i \in U} X_i Y_i \end{matrix}} \\
  & = \frac 1 { |U| |U \cap S_1| - |U \cap S_1|^2} \cdot \Paren{\begin{matrix} |U \cap S_1|  & -|U \cap S_1| \\ -|U \cap S_1| & |U| \end{matrix} } \cdot \Paren{\begin{matrix} \sum_{i \in U} Y_i \\ \sum_{i \in U} X_i Y_i \end{matrix}} \, .
  \end{align*}
  The sign of the second coordinate of $\beta$ (which gives the slope) is the same as the sign of
  \[
  - |U \cap S_1| \cdot \sum_{i \in U} Y_i + |U| \cdot \sum_{i \in U \cap S_1} Y_i = - |U \cap S_1| \cdot \sum_{i \in U\cap S_0} Y_i + |U\cap S_0|\sum_{i \in U\cap S_1} Y_i
  \]
  Among all $U$ with a given $|U|$ and $|U \cap S_1|$, this expression is clearly minimized by minimizing $\sum_{i \in U \cap S_1} Y_i$ and maximizing $\sum_{i \in U \cap S_0} Y_i$.
  This establishes correctness of the greedy step.

  Now suppose that the OLS slope on $U$ is non-positive; we need to show that the same holds for some $U'$ with $|U'| = |U| - 1$, to establish correctness of binary search.
  By the above, we can assume
  \[
  -|U \cap S_1| \cdot \sum_{i \in U} Y_i + |U| \cdot \sum_{i \in U \cap S_1} Y_i \leq 0
  \]
  which rearranges to
  \[
  \frac{\sum_{i \in U \cap S_1} Y_i}{|U \cap S_1|} \leq \frac{\sum_{i \in U \cap S_0} Y_i} {|U \cap S_0|} \, .
  \]
  This is clearly preserved by removing from $U$ either one of $i^* = \arg \max_{i \in U \cap S_1} Y_i$ and $j^* = \arg \min_{j \in U \cap S_0} Y_j$.
\end{proof}

\subsection{Nearly-Linear Time Algorithm for Difference-in-Differences}\label{ssec:diffs}
We study the following difference-in-differences regression setting.
$N$ individuals in two groups, treatment and non-treatment, each report two responses, $Y_{i,\text{before}},Y_{i,\text{after}} \in \R$.
Here ``before'' and ``after'' correspond, respectively, to before and after the time at which the treatment group is treated.
The difference-in-differences linear model is then
\[
Y = \beta_0 + \beta_{1} \cdot \text{time} + \beta_{2} \cdot \text{treatment} + \beta_{3} \cdot \text{time} \times \text{treatment} \,
\]
where ``time'', ``treatment'' assume values in $\{0,1\}$, and the coefficient of interest is $\beta_3$.
We study Stability with respect to the removal of individuals from the dataset -- note that removing an individual corresponds to removing two data points, ``before'' and ``after''.

The following lemma motivates our algorithm; it is based on a standard closed-form expression of the difference-in-difference regressor and is proved for completeness in Section~\ref{sec:proof-lem-diff-in-diffs}. 

\begin{lemma}
\label{lem:diff-in-diffs-reformulation}
    Let $Y_{1,\text{before}},Y_{1,\text{after}},\ldots,Y_{N,\text{before}},Y_{N,\text{after}} \in \R$ be a difference-in-differences dataset with $N$ individuals of which a subset $T \subseteq [N]$ are treated.
    For a subset $U \subseteq [N]$ of individuals, the coefficient $\beta_3^U$ of difference-in-differences on the dataset $U$ has the same sign as
    \[
    \E_{i \sim U \cap T} (Y_{i,\text{after}} - Y_{i,\text{before}} ) -  
    \E_{i \sim U} (Y_{i,\text{after}} - Y_{i,\text{before}}) \, .
    \]
\end{lemma}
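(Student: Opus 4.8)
The plan is to reduce the claim to the textbook closed form for the difference-in-differences estimator and then apply a one-line reweighting identity. Write $D_i = Y_{i,\text{after}} - Y_{i,\text{before}}$ for the within-individual change. The regression restricted to $U$ has $2|U|$ rows and a covariate vector $(1,\text{time},\text{treatment},\text{time}\times\text{treatment})$ that takes only four distinct values, one for each of the cells (time $\in \{0,1\}$) $\times$ (treatment $\in \{0,1\}$): namely $(1,0,0,0)$ on control-before observations, $(1,1,0,0)$ on control-after, $(1,0,1,0)$ on treated-before, and $(1,1,1,1)$ on treated-after, where ``treated'' $= U \cap T$ and ``control'' $= U \setminus T$. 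Assuming both $U \cap T$ and $U \setminus T$ are nonempty -- otherwise the design is rank-deficient and $\beta_3$ (as well as the conditional expectation $\E_{i \sim U \cap T}$) is not well-defined, and the statement should be read as vacuous -- these four vectors are linearly independent, so the column space of the design matrix is exactly the space of vectors that are constant on each cell.

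The first step is then to observe that the OLS fitted value in each cell equals the mean of the responses over the observations in that cell (the ``saturated model'' fact: the OLS projection onto the column space described above is the cell-mean function). Letting $\bar Y_{C,\text{before}}, \bar Y_{C,\text{after}}, \bar Y_{T,\text{before}}, \bar Y_{T,\text{after}}$ denote these four cell means, reading off the fitted values gives $\beta_0 = \bar Y_{C,\text{before}}$, $\beta_0 + \beta_1 = \bar Y_{C,\text{after}}$, $\beta_0 + \beta_2 = \bar Y_{T,\text{before}}$, and $\beta_0 + \beta_1 + \beta_2 + \beta_3 = \bar Y_{T,\text{after}}$. Solving the last equation for $\beta_3$ using the first three yields the standard expression
\[
\beta_3^U = \Paren{\bar Y_{T,\text{after}} - \bar Y_{T,\text{before}}} - \Paren{\bar Y_{C,\text{after}} - \bar Y_{C,\text{before}}} = \E_{i \sim U \cap T} D_i - \E_{i \sim U \setminus T} D_i \, .
\]
(Alternatively one can obtain this by explicitly inverting the $4 \times 4$ Gram matrix, exactly as in the proof of the binary-treatment theorem above; the saturated-model argument merely avoids that bookkeeping.)

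The second step is the reweighting. Set $p = |U \cap T| / |U| \in (0,1)$, so that $\E_{i \sim U} D_i = p \, \E_{i \sim U \cap T} D_i + (1-p)\, \E_{i \sim U \setminus T} D_i$. Substituting this into the quantity appearing in the lemma,
\[
\E_{i \sim U \cap T} D_i - \E_{i \sim U} D_i = (1-p)\Paren{\E_{i \sim U \cap T} D_i - \E_{i \sim U \setminus T} D_i} = (1-p)\, \beta_3^U \, ,
\]
and since $1 - p > 0$ the two sides have the same sign, which is the claim.

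I do not expect a genuine obstacle here; the only points needing care are (i) justifying that the design matrix has full column rank precisely when both the treated and control subsets of $U$ are nonempty, which legitimizes the identification of OLS fitted values with cell means, and (ii) explicitly flagging the degenerate cases $U \cap T = \emptyset$ or $U \subseteq T$, where the statement is vacuous. Everything else is routine algebra.
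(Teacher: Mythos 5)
Your proof is correct, but it reaches the key closed form by a different route than the paper. The paper writes down the $4\times 4$ Gram matrix $\Sigma_{m,s}$ of the diff-in-diffs design explicitly, computes its adjugate (in Mathematica), and reads off the sign of $\beta_3^U$ from the relevant row of $\det\Sigma_{m,s}\cdot\Sigma_{m,s}^{-1}$ applied to the cross-moment vector; after dividing by $sm$ this lands directly on the expression $\E_{i\sim U\cap T}(Y_{i,\text{after}}-Y_{i,\text{before}})-\E_{i\sim U}(Y_{i,\text{after}}-Y_{i,\text{before}})$ in the lemma. You instead observe that the design is saturated (four linearly independent covariate patterns, one per cell), so the OLS fit is the cell-mean function, which yields the classical identity $\beta_3^U=\E_{i\sim U\cap T}D_i-\E_{i\sim U\setminus T}D_i$ with no matrix inversion; you then need one extra step, the convex-combination identity $\E_{i\sim U}D_i=p\,\E_{i\sim U\cap T}D_i+(1-p)\,\E_{i\sim U\setminus T}D_i$, to convert the control-group mean into the full-sample mean and conclude that the lemma's quantity equals $(1-p)\beta_3^U$. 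Your route buys a cleaner, machine-free derivation and makes explicit the rank condition (both $U\cap T$ and $U\setminus T$ nonempty) under which $\beta_3^U$ is even well defined, a degeneracy the paper's proof also implicitly assumes when it divides by $sm$ and inverts $\Sigma_{m,s}$; the paper's route is more mechanical but generalizes more readily to the weighted-sample bookkeeping used elsewhere in that section. Both arguments are sound.
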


Now we can state our algorithm.
The algorithm below assumes that $\beta_3$ on the whole dataset is nonnegative; otherwise simply negate all the $Y$s.

\begin{algorithm}
\caption{Exact Algorithm For Auditing Difference-in-Differences}
\label{alg:diff-in-diffs}
\begin{algorithmic}[1]
\Procedure{RobustnessAuditDiffInDiffs}{$(Y_{1,\text{before}}, Y_{1,\text{after}}),\ldots, (Y_{N,\text{before}}, Y_{N,\text{after}})$, $T \subseteq [N]$}
    \State $\delta_i \gets Y_{i,\text{after}} - Y_{i,\text{before}}\quad\forall i \in [N]$;\qquad $\Delta_{T} \gets \{\delta_i \, : \, i \in T\}$; \qquad $\Delta_{\overline{T}} = \{ \delta_i \, : \, i \in [N] \setminus T\}$.
    \State Sort $\Delta_T$ in increasing and $\Delta_{\overline{T}}$ in decreasing order
    \State Store partial sums $S^T_\ell$ defined as the sum of the first $\ell$ terms of $\Delta_T$ for $\ell\leq |T|$ 
    \State Store partial sum $S^{\overline{T}}_\ell$ defined as the sum of the first $\ell$ terms of $\Delta_{\overline{T}}$ for $\ell\leq N-|T|$

    \State $lower, upper\gets 0, N$ \Comment{$lower$ is too small and $upper$ is sufficient to flip sign}
    \While{\textsc{True}}
        \State $\textsc{Flag}\gets\textsc{False}$; \quad $k\gets \lfloor (lower+upper)/2\rfloor$
        \For{$\ell \gets \max\{0,k-|T|-N+1\}$ to $\min\{|T|-1,k\}$} \Comment{Iterate over \# of treated to remove}
        \State \textbf{if} $\frac{S_{|\Delta_T|-\ell}^{T}}{|T|-\ell}
        -\frac{S_{|\Delta_T|-\ell}^{T}+S_{|\Delta_{\overline{T}}|-(k-\ell)}}{N-k}
        \leq 0$ \textbf{do} $\textsc{Flag}\gets\textsc{True}$
        \EndFor
        \State \textbf{if} $lower=k=upper-1$ and \textsc{Flag}: \quad \Return $k$
        \State \textbf{if} $lower=k=upper-1$ and \textsc{not Flag}: \quad \Return $k+1$
        \State \textbf{if} $\textsc{Flag}$: $upper\gets k$;\quad \textbf{else}: $lower\gets k$
    \EndWhile
\EndProcedure
\end{algorithmic}
\end{algorithm}

We show:
\begin{theorem}
    Algorithm~\ref{alg:diff-in-diffs} runs in $O(N\log N)$-time algorithm, taking $N$ individuals $\{(Y_{i,\text{before}}, Y_{i,\text{after}})\}_{i \in N}$, divided into treated and untreated subgroups, and returns the size of a minimum-size set $U \subseteq [N]$ such that the sign of $\beta_3$ for the dataset $U$ differs from the sign of $\beta_3$ on the dataset $[N]$.
\end{theorem}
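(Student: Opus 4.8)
The plan is to mirror the analysis of Algorithm~\ref{alg:binary}: use Lemma~\ref{lem:diff-in-diffs-reformulation} to reduce sign-tracking of $\beta_3$ to a comparison of averages of the quantities $\delta_i := Y_{i,\text{after}} - Y_{i,\text{before}}$, establish correctness of the greedy inner step, and then prove the monotonicity property underlying the outer binary search. The running time is immediate: sorting $\Delta_T$ and $\Delta_{\overline T}$ and building the prefix sums $S^T, S^{\overline T}$ costs $O(N\log N)$; the \textsc{While} loop runs $O(\log N)$ rounds of binary search, and each round's \textsc{For} loop touches at most $N$ values of $\ell$, doing $O(1)$ work apiece via the prefix sums, for $O(N\log N)$ overall.

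For correctness, fix the number $k$ of individuals we commit to removing and let $U \subseteq [N]$ with $|U| = N-k$ be the retained set. By Lemma~\ref{lem:diff-in-diffs-reformulation}, $\text{sign}(\beta_3^U)$ equals the sign of
\[
\E_{i\sim U\cap T}\delta_i \;-\; \E_{i\sim U}\delta_i \;=\; \Paren{\frac1{|U\cap T|}-\frac1{|U|}}\sum_{i\in U\cap T}\delta_i \;-\; \frac1{|U|}\sum_{i\in U\setminus T}\delta_i\,.
\]
Since $|U\cap T|\le|U|$, the coefficient of $\sum_{i\in U\cap T}\delta_i$ is nonnegative while that of $\sum_{i\in U\setminus T}\delta_i$ is negative; consequently, among all retained sets with prescribed sizes $|U|=N-k$ and $|U\cap T|=|T|-\ell$, this expression is minimized by letting $U\cap T$ consist of the $|T|-\ell$ treated individuals with the smallest $\delta_i$ and $U\setminus T$ of the $N-|T|-(k-\ell)$ untreated individuals with the largest $\delta_i$ --- exactly the sums $S^T_{|T|-\ell}$ and $S^{\overline T}_{N-|T|-(k-\ell)}$ the algorithm evaluates, and the tested inequality is precisely ``$\E_{i\sim U\cap T}\delta_i-\E_{i\sim U}\delta_i\le 0$'' for this minimizing $U$. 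Because the \textsc{For} loop ranges over every $\ell$ for which both retained groups are nonempty and neither group is asked to give up more members than it has, \textsc{Flag}$(k)$ becomes \textsc{True} iff some $k$-element removal set makes $\beta_3^U\le 0$, i.e.\ flips the sign (recall $\beta_3\ge 0$ on $[N]$). This is correctness of the greedy step.

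Finally, for the binary search I would establish monotonicity: if some $k$-element removal set flips the sign, so does some $(k+1)$-element one. Writing $\E_{i\sim U}\delta_i$ as the convex combination $\tfrac{|U\cap T|}{|U|}\E_{i\sim U\cap T}\delta_i + \tfrac{|U\setminus T|}{|U|}\E_{i\sim U\setminus T}\delta_i$, the sign-flip condition on the greedy-optimal $U$ of size $N-k$ is equivalent to $\E_{i\sim U\cap T}\delta_i \le \E_{i\sim U\setminus T}\delta_i$. This is preserved by deleting from $U$ either the treated individual with largest $\delta_i$ (which only lowers the left side) or the untreated individual with smallest $\delta_i$ (which only raises the right side); at least one such deletion is available unless $|U\cap T|=|U\setminus T|=1$ --- the degenerate case, excluded by the loop bounds, in which no individual can be removed while keeping both groups nonempty. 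Combined with greedy-step correctness this shows \textsc{Flag} is monotone in $k$, so the binary search returns $\text{Stability}$. The only genuinely delicate point is the sign bookkeeping in the displayed decomposition, which is what lets the greedy choices for the treated and untreated groups decouple; everything else parallels Algorithm~\ref{alg:binary}.
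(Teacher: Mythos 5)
Your proposal is correct and follows essentially the same structure as the paper's proof: the same running-time accounting, greedy-step correctness via Lemma~\ref{lem:diff-in-diffs-reformulation} (which you usefully spell out via the decomposition of $\E_{i\sim U\cap T}\delta_i - \E_{i\sim U}\delta_i$, where the paper simply calls it ``clear''), and monotonicity to justify binary search. The one cosmetic difference is the monotonicity step: you rearrange the sign condition to $\E_{i\sim U\cap T}\delta_i \le \E_{i\sim U\setminus T}\delta_i$ and delete an extremal individual (exactly the paper's argument for the binary 2D algorithm), whereas the paper's diff-in-diff proof instead always removes an untreated individual via a two-case analysis; both arguments are equally valid and of the same difficulty.
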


\begin{proof}
    Running time is straightforward: lines 4,5 each take $O(N \log N)$ time to sort and $O(N)$ time for the partial sums.
    Then binary search requires $O(\log N)$ rounds, each requiring $O(N)$ times through the for loop on line 8; each execution of the loop is constant time using the stored partial sums.

    For correctness, we need to argue two things.
    First, we must show that if there is a subset of $k$ individuals which can be removed to make $\beta_3$ negative, then there is also such a subset of $k+1$ individuals -- this gives correctness of binary search.
    Second, we must show that if there is such a subset of $k$ individuals, then for some $\ell \leq k$ it can be found by removing the $\ell$ treated individuals with greatest $\delta_i$ and the $k-\ell$ non-treated individuals with least $\delta_i$.

    We start with monotonicity.
    Suppose that some subset of individuals $U \subseteq [N]$ has non-positive $\beta_3$.
    Then by Lemma~\ref{lem:diff-in-diffs-reformulation}, $\E_{i \sim U \cap T} \delta_i \leq \E_{i \sim U} \delta_i$.
    If some non-treated individual $j \in U$ has $\delta_j \leq \E_{i \sim U} \delta_i$, then we can remove $j$ from $U$ and maintain the inequality $\E_{i \sim (U \setminus \{j\}) \cap T} \delta_i \leq \E_{i \sim (U \setminus \{j\})} \delta_i$.

    Otherwise, every non-treated individual $j \in U$ has $\delta_j > \E_{i \sim U} \delta_i$.
    Now we remove any non-treated individual $j$ and obtain
    \[
    \E_{i \sim U \setminus \{j\}} \delta_i
    = \frac{|U \cap T|}{|U|-1} \cdot \E_{i \sim U \cap T} \delta_i + \frac{|U \cap \overline{T}|-1}{|U|-1} \cdot \E_{i \sim U \cap \overline{T} \setminus \{j \}} \delta_i
    \geq \frac{|U \cap T|}{|U|-1} \cdot \E_{i \sim U \cap T} \delta_i + \frac{|U \cap \overline{T}|-1}{|U|-1} \cdot \E_{i \sim U} \delta_i 
    \]
    where for the last inequality we used that every nontreated $j \in U$ has $\delta_j > \E_{i \sim U} \delta_i$.
    By hypothesis, $\E_{i \sim U} \delta_i \geq \E_{i \sim U \cap T} \delta_i$, so overall we got $\E_{i \sim U \setminus \{j\}} \delta_i \geq \E_{i \sim U \cap T} \delta_i$ as desired.

    Correctness of the greedy step is clear from Lemma~\ref{lem:diff-in-diffs-reformulation}.
\end{proof}

\subsubsection{Proof of Lemma~\ref{lem:diff-in-diffs-reformulation}}
\label{sec:proof-lem-diff-in-diffs}
We turn to the proof of Lemma~\ref{lem:diff-in-diffs-reformulation}, starting with some setup.

We reformulate difference-in-differences as an OLS regression problems with the usual $(X_i,Y_i)$ pairs.
Each individual contributes two vectors $X_{i,\text{before}}, X_{i,\text{after}} \in \{0,1\}^4$, where the first coordinate corresponds to the intercept of the regression line and the remaining three coordinates are time, treatment, and time $\times$ treatment, respectively. That is,
\begin{align*}
& X_{i,\text{before}}(0) = 1\\
& X_{i,\text{before}}(1) = 0\\
& X_{i,\text{before}}(2) = 0 \text{ if $i$ is not treated and otherwise } 1\\
& X_{i,\text{before}}(3) = 0\\
& X_{i,\text{after}}(0) = 1\\
& X_{i,\text{after}}(1) = 1\\
& X_{i,\text{after}}(2) = 0 \text{ if $i$ is not treated and otherwise } 1\\
& X_{i,\text{after}}(3) = 0 \text{ if $i$ is not treated and otherwise } 1 \, .
\end{align*}

We need one definition:
\begin{definition}
  Let $n \in \N$ and $s \in \N$ with $s \leq n$.
  The $(n,s)$-diff-in-diff covariance matrix is:
  \[
  \Sigma_{n,s} = \left ( \begin{matrix}
      n          & n/2 & s          & s/2 \\
      n/2 & n/2 & s/2 & s/2 \\
      s          & s/2 & s          & s/2 \\
      s/2 & s/2 & s/2 & s/2
  \end{matrix} \right ) \, .
  \]
  Note that for a diff-in-diff dataset with $n/2$ individuals and hence $n$ samples $X_i$, where $s/2$ of those individuals are in the treatment group, $\Sigma_{n,s} = \sum_{i \leq n} X_i X_i^\top$.
\end{definition}

The following is easy to check in e.g. Mathematica:

\begin{fact} 
\[
\det \Sigma_{n,s} \cdot \Sigma_{n,s}^{-1} =
\frac 1 8 \cdot \left(
\begin{array}{cccc}
 s^2 (n-s) & s^2 (s-n) & s^2 (s-n) & s^2 (n-s) \\
 s^2 (s-n) & 2 s^2 (n-s) & s^2 (n-s) & 2 s^2 (s-n) \\
 s^2 (s-n) & s^2 (n-s) & n s (n-s) & n s (s-n) \\
 s^2 (n-s) & 2 s^2 (s-n) & n s (s-n) & 2 n s (n-s) \\
\end{array}
\right)
\]
\end{fact}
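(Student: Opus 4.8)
The statement is a purely linear-algebraic identity about one fixed $4\times 4$ matrix, so in principle one just expands both sides and compares; the only ``idea'' worth having is that $\Sigma_{n,s}$ has Kronecker-product structure, which collapses everything to two $2\times 2$ matrices. The plan is therefore: first, group the four coordinates of the diff-in-diff design as (not-treatment vs.\ treatment) $\otimes$ (not-time vs.\ time), and check directly from the definition of $\Sigma_{n,s}$ that
\[
\Sigma_{n,s} \;=\; M \otimes A, \qquad M = \begin{pmatrix} 1 & s/n \\ s/n & s/n \end{pmatrix}, \qquad A = \begin{pmatrix} n & n/2 \\ n/2 & n/2 \end{pmatrix},
\]
where the outer factor $M$ records whether a coordinate involves the treatment indicator and the inner factor $A$ records whether it involves the time indicator. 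By symmetry of $\Sigma_{n,s}$ only ten scalar entries need to be matched against $M\otimes A$, which is immediate.

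Second, I would invoke the standard Kronecker identities $(M\otimes A)^{-1} = M^{-1}\otimes A^{-1}$ and $\det(M\otimes A) = (\det M)^{2}(\det A)^{2}$ (both blocks being $2\times 2$), together with $X^{-1} = \operatorname{adj}(X)/\det X$ applied to the $2\times 2$ blocks, to obtain
\[
\det(\Sigma_{n,s})\cdot \Sigma_{n,s}^{-1} \;=\; (\det M)(\det A)\cdot \operatorname{adj}(M)\otimes \operatorname{adj}(A).
\]
Plugging in $\det M = s(n-s)/n^{2}$, $\det A = n^{2}/4$ makes the prefactor $s(n-s)/4$, and with
\[
\operatorname{adj}(M) = \begin{pmatrix} s/n & -s/n \\ -s/n & 1 \end{pmatrix}, \qquad \operatorname{adj}(A) = \begin{pmatrix} n/2 & -n/2 \\ -n/2 & n \end{pmatrix},
\]
the $2{\times}2{\times}2{\times}2$ Kronecker product has entries that are monomials in $n,s$; multiplying by $s(n-s)/4$ and comparing the sixteen (really ten, by symmetry) entries with the displayed matrix finishes the proof.

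There is also a brute-force route that avoids spotting the tensor structure and is closest to ``easy to check in Mathematica'': by definition $\det(\Sigma_{n,s})\cdot \Sigma_{n,s}^{-1}$ is the adjugate $\operatorname{adj}(\Sigma_{n,s})$, whose $(i,j)$ entry is $(-1)^{i+j}$ times the $(j,i)$ minor, and symmetry again reduces this to ten $3\times 3$ determinants. Either way there is no conceptual obstacle; the only thing requiring care is bookkeeping — keeping the ordering of the four coordinates consistent between the definition of $\Sigma_{n,s}$ and the Kronecker factorization, and tracking the signs of the cofactors — which is exactly why the authors delegate the check to a computer algebra system.
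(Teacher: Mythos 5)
Your proposal is correct, and it is worth noting that the paper offers no proof at all here: the Fact is justified only by the remark that it is ``easy to check in e.g.\ Mathematica.'' Your Kronecker factorization is therefore a genuine addition rather than a restatement. I verified the key step: with the coordinate ordering (intercept, time, treatment, time$\times$treatment) one indeed has $\Sigma_{n,s} = M \otimes A$ for your $M$ and $A$, and the identity $\det(M\otimes A)\cdot(M\otimes A)^{-1} = (\det M)(\det A)\cdot\operatorname{adj}(M)\otimes\operatorname{adj}(A)$ with $(\det M)(\det A) = s(n-s)/4$ reproduces the displayed matrix entry for entry (e.g.\ the $(4,4)$ entry is $\tfrac{s(n-s)}{4}\cdot 1 \cdot n = \tfrac{1}{8}\cdot 2ns(n-s)$, as claimed). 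What your route buys is a human-checkable derivation that also explains \emph{why} the adjugate has its visible $2\times 2$ block pattern --- the upper-left and lower-right $2\times2$ blocks are scalar multiples of $\operatorname{adj}(A)$ --- whereas the computer-algebra check buys nothing but certainty. One small point of rigor: the identity $(M\otimes A)^{-1} = M^{-1}\otimes A^{-1}$ requires $M$ invertible, i.e.\ $0 < s < n$; if you want the adjugate identity in full generality you should either note that both sides are polynomial in $n,s$ so the identity extends from the dense invertible locus, or work directly with $\operatorname{adj}(M\otimes A) = \operatorname{adj}(M)\otimes\operatorname{adj}(A)$ (valid for $2\times2$ factors without invertibility). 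Since the Fact as stated already presupposes $\Sigma_{n,s}^{-1}$ exists, this is cosmetic.
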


\begin{proof}[Proof of Lemma~\ref{lem:diff-in-diffs-reformulation}]
Let $U \subseteq [N]$ be a subset of individuals in a diff-in-diffs dataset, where $T \subseteq [N]$ are the treated individuals and where $|U| = m/2$ and $U$ contains $s/2$ treatment individuals.
Then
\[
  \beta^U = \Sigma_{m,s}^{-1} \cdot \sum_{i \in U} X_{i,\text{before}} Y_{i,\text{before}} + X_{i,\text{after}} Y_{i,\text{after}}
\]

Since $\Sigma_{m,s} \succeq 0$ and hence $\det \Sigma_{m,s} \geq 0$, the sign of $\beta^U_3$ is the same as the sign of
\[
(s^2 (m-s), 2s^2 (s-m), ms(s-m), 2ms(m-s))^\top \sum_{i \in U} X_{i,\text{before}} Y_{i,\text{before}} + X_{i,\text{after}} Y_{i,\text{after}} \, ,
\]
which, since $s \geq 0$ and $m-s \geq 0$, has the same sign as
\[
(s, -2s, -m, 2m)^\top \sum_{i \in U} X_{i,\text{before}} Y_{i,\text{before}} + X_{i,\text{after}} Y_{i,\text{after}} \, .
\]
Dividing by $sm$, applying the definition of $X_{i,\text{before}}$ and $X_{i,\text{after}}$, and simplifying, this has the same sign as
\[
\E_{i \sim U} (Y_{i,\text{before}} + Y_{i,\text{after}} ) - 2 \E_{i \sim U} Y_{i,\text{after}} - \E_{i \sim U \cap T} (Y_{i,\text{before}} + Y_{i,\text{after}} ) + 2 \E_{i \sim U \cap T} Y_{i,\text{after}} \, ,
\]
which rearranges to the conclusion of the lemma.
\end{proof}

\subsection{Spectral Algorithm}\label{ssec:spectral}
In this section we describe and analyze our spectral robustness auditor.

\begin{algorithm}
\caption{Spectral Robustness Auditing}
\label{alg:spectral}
\begin{algorithmic}[1]
\Procedure{RobustnessAuditSpectral}{$X,Y$}
    \State $n \gets \text{len}(X)$
    \State $\beta \gets \text{OLS}(X,Y)$
    \State $M_1 \gets $ a $d \times n$ matrix where $i$-th column is $X_i \cdot (\iprod{X_i,\beta} - y_i)$
    \State $\Sigma \gets \tfrac 1 n \sum_{i \leq n} X_i X_i^\top$
    \State $C_1 \gets \|\Sigma^{-1/2} M_1\| / \sqrt{n}$ (maximum singular value of $\Sigma^{1/2} M_1 / \sqrt{n}$)
    \State $M_2 \gets $ a $d^2 \times n$ matrix where the $i$-th column is $\Sigma^{-1/2}X_i \otimes \Sigma^{-1/2} X_i$.
    \State $\Phi \gets $ a $d^2$-length vector where $\Phi_{i,i} = 1$ and $\Phi_{i,j} = 0$ if $i \neq j$.
    \State $W \gets (\tfrac 3 {2+d})^{1/2} \cdot \tfrac{\Phi \Phi^\top}{d} + (\tfrac 3 2)^{1/2} \cdot (I - \tfrac{\Phi \Phi^\top}{d})$.
    \State $C_2 \gets \|W M_2  / \sqrt{n}\|$ (maximum singular value of $W M_2 / \sqrt{n}$)
    \State $\varepsilon \gets \frac {\beta_i^2}{C_1 \cdot \sqrt{\Sigma^{-1}_{i,i}} + C_2 |\beta_i|}$
    
    \Return $\varepsilon$
\EndProcedure
\end{algorithmic}
\end{algorithm}

The key lemma is the following one, which is explicit to varying degrees in prior works such as \cite{klivans2018efficient,bakshi2021robust}.
We provide a short proof for completeness.

\begin{lemma}[Implicit in \cite{bakshi2021robust}]
\label{lem:spectral-correct-main}
  Let $X_1,\ldots,X_n \in \R^d, y_1,\ldots,y_n \in \R$ and let $\beta$ be the solution to OLS on $\{(X_i,y_i)\}_{i \in [n]}$.
  Let $\Sigma = \tfrac 1 n \sum_{i \leq n} X_i X_i^\top$.
  Let $C_1,C_2 \geq 0$ satisfy the following inequalities for every $v \in \R^d$:
  \begin{align*}
      & \frac 1 n \sum_{i \leq n} \iprod{X_i,v}^2 (\iprod{X_i,\beta} - y_i)^2 \leq C_1 \cdot \iprod{v, \Sigma v} \\
      & \frac 1 n \sum_{i \leq n} \iprod{X_i,v}^4 \leq C_2 \cdot \iprod{v, \Sigma v}^2 \, .
  \end{align*}
  Let $S \subseteq [n]$ and let $\beta_S$ be the solution to OLS on $\{(X_i,y_i)\}_{i \in S}$.
  Then
  \[
    \frac{n- |S|}{n} \geq \frac{(\beta_S - \beta)_1^2}{ \left( \sqrt{C_1} \|\Sigma^{-1/2} e_1\| + |(\beta_S - \beta)_1|\sqrt{C_2} \right)^2} \, ,
  \]
  where $(\beta_S - \beta)_1$ is the first coordinate of the vector $(\beta_S - \beta)$.
\end{lemma}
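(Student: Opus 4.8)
The plan is to derive an exact formula for $(\beta_S - \beta)_1$ from the OLS normal equations and then bound it using the two hypotheses via Cauchy--Schwarz. Write $r_i := \iprod{X_i,\beta} - y_i$ for the full-data residuals, $v := \beta_S - \beta$, and $\bar S := [n]\setminus S$. From $\sum_{i\in[n]} X_i r_i = 0$ (stationarity of $\beta$), $\sum_{i\in S} X_i(\iprod{X_i,\beta_S}-y_i)=0$ (stationarity of $\beta_S$), and the identity $\iprod{X_i,\beta_S}-y_i = \iprod{X_i,v} + r_i$, one obtains $A_S v = \sum_{i\in\bar S} X_i r_i$, where $A_S := \sum_{i\in S} X_iX_i^\top = n\Sigma - \sum_{i\in\bar S}X_iX_i^\top$. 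Pairing with $A_S^{-1}e_1$ then gives the clean identity $(\beta_S-\beta)_1 = \sum_{i\in\bar S}\iprod{A_S^{-1}e_1, X_i}\, r_i$.

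Next I would bound $|(\beta_S-\beta)_1| \le \sqrt{|\bar S|}\,\big(\sum_{i\in\bar S}\iprod{A_S^{-1}e_1,X_i}^2 r_i^2\big)^{1/2}$ by Cauchy--Schwarz, then apply the first hypothesis with $v = A_S^{-1}e_1$ (after enlarging the sum from $\bar S$ to $[n]$, which only helps since the terms are nonnegative) to get $\sum_{i\in\bar S}\iprod{A_S^{-1}e_1,X_i}^2 r_i^2 \le nC_1\|\Sigma^{1/2}A_S^{-1}e_1\|^2$; hence $|(\beta_S-\beta)_1| \le \sqrt{|\bar S|nC_1}\,\|\Sigma^{1/2}A_S^{-1}e_1\|$. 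The remaining work is to replace $\|\Sigma^{1/2}A_S^{-1}e_1\|$ by $\|\Sigma^{-1/2}e_1\|$ up to a controlled factor. Factoring $A_S = \Sigma^{1/2}(nI - R')\Sigma^{1/2}$ with $R' := \Sigma^{-1/2}\big(\sum_{i\in\bar S}X_iX_i^\top\big)\Sigma^{-1/2}\succeq 0$, one has $\Sigma^{1/2}A_S^{-1}e_1 = (nI-R')^{-1}\Sigma^{-1/2}e_1$, so $\|\Sigma^{1/2}A_S^{-1}e_1\| \le \|\Sigma^{-1/2}e_1\|/(n-\|R'\|)$ whenever $\|R'\| < n$. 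Finally, the second hypothesis bounds $\|R'\|$: for any unit $u$, $u^\top R' u = \sum_{i\in\bar S}\iprod{X_i,\Sigma^{-1/2}u}^2 \le \sqrt{|\bar S|}\,\big(\sum_{i\in\bar S}\iprod{X_i,\Sigma^{-1/2}u}^4\big)^{1/2}\le \sqrt{|\bar S| n C_2}$, using Cauchy--Schwarz and the $C_2$-inequality with $v = \Sigma^{-1/2}u$ (for which $\iprod{v,\Sigma v}=1$); since $R'\succeq 0$ this gives $\|R'\|\le\sqrt{|\bar S|nC_2}$.

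Setting $\varepsilon := (n-|S|)/n = |\bar S|/n$, the case $\varepsilon C_2 \ge 1$ is handled separately and trivially: the denominator in the claimed bound is at least $|(\beta_S-\beta)_1|\sqrt{C_2}$, so the ratio there is at most $1/C_2\le\varepsilon$. When $\varepsilon C_2 < 1$ we have $\|R'\| \le \sqrt{|\bar S|nC_2} < n$, so chaining the three estimates above and dividing numerator and denominator by $n$ gives $|(\beta_S-\beta)_1| \le \sqrt{\varepsilon C_1}\,\|\Sigma^{-1/2}e_1\| / (1 - \sqrt{\varepsilon C_2})$. Rearranging this into $|(\beta_S-\beta)_1| \le \sqrt{\varepsilon}\big(\sqrt{C_1}\|\Sigma^{-1/2}e_1\| + |(\beta_S-\beta)_1|\sqrt{C_2}\big)$ and squaring both nonnegative sides yields exactly the stated inequality.

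I expect the one genuinely non-obvious step to be the passage from $A_S^{-1}$ (covariance of the \emph{retained} points) to $\Sigma^{-1}$ (covariance of \emph{all} points): deleting even a few points can in principle make $A_S^{-1}$ large, and it is precisely the fourth-moment hypothesis that keeps $nI - R'$ well-conditioned. This is also where the slightly unusual self-referential term $|(\beta_S-\beta)_1|\sqrt{C_2}$ in the denominator comes from --- it is the $1/(1-\sqrt{\varepsilon C_2})$ factor, moved to the other side. Everything else is Cauchy--Schwarz and bookkeeping.
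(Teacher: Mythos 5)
Your proof is correct and rests on the same pillars as the paper's: the identity $n\Sigma_S(\beta_S-\beta)=\sum_{i\in\bar S}X_i r_i$ extracted from the two stationarity conditions, a Cauchy--Schwarz step over $\bar S$ combined with the $C_1$ hypothesis, and the observation that the $C_2$ hypothesis forces the retained covariance to satisfy $\Sigma_S\succeq\bigl(1-\sqrt{\varepsilon C_2}\bigr)\Sigma$ --- your operator-norm bound $\|R'\|\le n\sqrt{\varepsilon C_2}$ is exactly the paper's Claim~\ref{clm:covariance-lb} conjugated by $\Sigma^{-1/2}$. The only structural difference is the test vector: the paper pairs the identity with $\beta_S-\beta$ itself, bounds $\|\Sigma^{1/2}(\beta_S-\beta)\|$, and projects onto $e_1$ at the very end, whereas you pair with $A_S^{-1}e_1$ and extract the first coordinate directly; both routes land on the identical intermediate inequality $|(\beta_S-\beta)_1|\le\sqrt{\varepsilon C_1}\,\|\Sigma^{-1/2}e_1\|/(1-\sqrt{\varepsilon C_2})$ and the same rearrangement. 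One small point in your favor: you explicitly dispose of the degenerate case $\varepsilon C_2\ge 1$, in which the factor $1-\sqrt{\varepsilon C_2}$ is nonpositive and the division the paper performs is not legitimate (and in which your $A_S^{-1}$ might not exist); the paper leaves this case implicit.
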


This allows us to prove:

\begin{theorem}
    Algorithm~\ref{alg:spectral} returns a valid lower bound on $\text{Stability}(X,Y)$ using $O(1)$ top singular values of matrices of dimension at most $n \times d^2$, a single $d \times d$ matrix inverse, and additional running time $O(nd^2)$.
\end{theorem}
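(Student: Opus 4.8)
The plan is to read correctness off Lemma~\ref{lem:spectral-correct-main} and to get the resource bound from a line‑by‑line cost accounting of Algorithm~\ref{alg:spectral}. Throughout assume $\Sigma \succ 0$, so that OLS, $\Sigma^{-1}$ and $\Sigma^{-1/2}$ are well defined, and (permuting coordinates) that $i=1$. The only real content is to verify that $C_1$ and $C_2$ as computed on lines~6 and~10 are the square roots of admissible constants in the sense of Lemma~\ref{lem:spectral-correct-main}; everything else is substitution and bookkeeping. For $C_1$: substituting $v=\Sigma^{-1/2}u$ gives $\iprod{v,\Sigma v}=\|u\|^2$, and since the $i$-th column of $M_1$ is $X_i(\iprod{X_i,\beta}-y_i)$,
\[
\tfrac1n\sum_{i\le n}\iprod{X_i,v}^2(\iprod{X_i,\beta}-y_i)^2 \;=\; \tfrac1n\|M_1^\top v\|^2 \;=\; \tfrac1n\,u^\top\Sigma^{-1/2}M_1M_1^\top\Sigma^{-1/2}u \;\le\; \tfrac1n\|\Sigma^{-1/2}M_1\|^2\,\iprod{v,\Sigma v},
\]
so $\tfrac1n\|\Sigma^{-1/2}M_1\|^2=C_1^2$ works.

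For $C_2$, write $Z_i=\Sigma^{-1/2}X_i$, so the $i$-th column of $M_2$ is $Z_i\otimes Z_i$ and $\iprod{X_i,v}^4=\iprod{Z_i\otimes Z_i,u\otimes u}^2$, whence $\tfrac1n\sum_{i\le n}\iprod{X_i,v}^4=\tfrac1n\|M_2^\top(u\otimes u)\|^2$. The key point is that $W$ is calibrated so that $W^{-1}$ exactly isotropizes rank‑one tensors: since $\iprod{u\otimes u,\Phi}=\|u\|^2$, $\|\Phi\|^2=d$ and $\|u\otimes u\|^2=\|u\|^4$, the component of $u\otimes u$ along $\Phi$ has squared norm $\|u\|^4/d$ and its orthogonal complement has squared norm $\|u\|^4(d-1)/d$; using $W^{-1}=(\tfrac{2+d}{3})^{1/2}\tfrac{\Phi\Phi^\top}{d}+(\tfrac23)^{1/2}(I-\tfrac{\Phi\Phi^\top}{d})$ a direct computation gives $\|W^{-1}(u\otimes u)\|^2=\tfrac{2+d}{3}\cdot\tfrac{\|u\|^4}{d}+\tfrac23\cdot\tfrac{(d-1)\|u\|^4}{d}=\|u\|^4$. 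Therefore
\[
\tfrac1n\|M_2^\top(u\otimes u)\|^2 \;=\; \tfrac1n\,\bigl(W^{-1}(u\otimes u)\bigr)^\top\bigl(WM_2M_2^\top W\bigr)\bigl(W^{-1}(u\otimes u)\bigr) \;\le\; \tfrac1n\|WM_2\|^2\|u\|^4 \;=\; C_2^2\,\iprod{v,\Sigma v}^2,
\]
so $\tfrac1n\|WM_2\|^2=C_2^2$ works. Lemma~\ref{lem:spectral-correct-main} now applies with these constants; any $S$ with $\mathrm{sign}((\beta_S)_1)\neq\mathrm{sign}(\beta_1)$ has $|(\beta_S-\beta)_1|\ge|\beta_1|$, and $x\mapsto x^2/(a+bx)^2$ is nondecreasing on $x\ge0$ for $a,b\ge0$, so the lemma gives $|S|/n\ge \beta_1^2/(C_1\sqrt{\Sigma^{-1}_{1,1}}+C_2|\beta_1|)^2$ for every such $S$ — i.e. $\varepsilon$ (line~11, read as a fraction of $n$) is a valid lower bound on $\text{Stability}(X,Y)$.

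For the resource bound, go line by line. Line~3 computes $X^\top X=n\Sigma$ and $X^\top Y$ in $O(nd^2)$ and then $\beta$; an eigendecomposition of $\Sigma$ (a constant number of $O(d^3)$‑time dense $d\times d$ operations) simultaneously produces $\Sigma^{-1}$ and $\Sigma^{-1/2}$ used later. Lines~4--5 form $M_1$ ($d\times n$) and $\Sigma$ in $O(nd^2)$. Line~6 forms $\Sigma^{-1/2}M_1$ in $O(nd^2)$ and then takes one top singular value of a $d\times n$ matrix. Line~7 forms $M_2$ ($d^2\times n$) in $O(nd^2)$ — each column is $\Sigma^{-1/2}X_i$ ($O(d^2)$) followed by a $d^2$-entry tensor product. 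Lines~8--9 store $\Phi$ and the two scalars $a=(\tfrac{3}{2+d})^{1/2}$, $b=(\tfrac32)^{1/2}$ defining $W=aP+b(I-P)$ with $P=\Phi\Phi^\top/d$; $W$ is never materialized, since $Wz=bz+(a-b)\tfrac{\iprod{\Phi,z}}{d}\Phi$ costs $O(d^2)$, so $WM_2$ costs $O(nd^2)$. Line~10 takes one top singular value of $WM_2/\sqrt n$, a $d^2\times n$ (hence at most $n\times d^2$) matrix, each power‑iteration step being a multiplication by $(WM_2)^\top$ and then $WM_2$, i.e. $O(nd^2)$. Line~11 is $O(1)$. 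Altogether: $O(1)$ top singular value computations on matrices of dimension at most $n\times d^2$, one $d\times d$ matrix inverse (the eigendecomposition), and $O(nd^2)$ further arithmetic.

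I expect the only nonroutine step to be the $C_2$ verification — specifically the identity $\|W^{-1}(u\otimes u)\|=\|u\|^2$, i.e. that $W$ is precisely the reweighting that whitens the cone of rank‑one tensors given that their $\Phi$-component is pinned at $\|u\|^2$. Everything else (the $C_1$ rewriting, the sign‑flip observation, and the cost accounting) is mechanical.
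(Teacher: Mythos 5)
Your proof is correct and follows essentially the same route as the paper's: invoke Lemma~\ref{lem:spectral-correct-main}, check that the algorithm's $C_1,C_2$ are (square roots of) admissible constants, with the key step being that $W=(\tfrac23 I+\tfrac13\Phi\Phi^\top)^{-1/2}$ whitens rank-one tensors so that $\|W^{-1}(v\otimes v)\|=\|v\|^2$. You additionally spell out details the paper leaves implicit (the monotonicity of $x\mapsto x^2/(a+bx)^2$ needed to pass from $(\beta_S-\beta)_1$ to $\beta_1$, the squared denominator on line~11, and the line-by-line cost accounting), but the underlying argument is the same.
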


\begin{proof}
In light of Lemma~\ref{lem:spectral-correct-main}, to prove correctness of Algorithm~\ref{alg:spectral} we just need to show that $C_1$ and $C_2$ as computed in Algorithm~\ref{alg:spectral} satisfy the hypotheses of Lemma~\ref{lem:spectral-correct-main}.
For $C_1$ this is clear by construction.

For $C_2$, we first observe that by replacing $v$ with $\Sigma^{-1/2} v$ we can just as well prove that for all $v \in \R^d$,
\[
\frac 1 n \sum_{i \leq n} \iprod{\Sigma^{-1/2} X_i, v}^4 \leq C_2 \|v\|^4 = C_2 \cdot (v \otimes v)^\top (\tfrac 2 3 I + \tfrac 1 {3} \Phi \Phi^\top) (v \otimes v) \, .
\]
Simple linear algebra shows that the matrix $W$ in Algorithm~\ref{alg:spectral} is exactly $(\tfrac 2 3 I + \tfrac 1 {3} \Phi \Phi^\top)^{-1/2}$.
So replacing $v \otimes v$ with $W^{-1/2} (v \otimes v)$ shows that $\|W M_2 / \sqrt{n}\|$ is a valid choice for $C_2$.
This proves correctness of Algorithm~\ref{alg:spectral}.
The running time is clear from inspection.
\end{proof}

\subsubsection{Proof of Lemma~\ref{lem:spectral-correct-main}}
To prove the lemma we need the following claim, which says $\tfrac 1 n \sum_{i \in S} X_i X_i^\top$ isn't too different from $\Sigma$.

\begin{claim}
\label{clm:covariance-lb}
  Let $X_1,\ldots,X_n,y_1,\ldots,y_n, \Sigma, S$, and $C_2$ be as in Lemma~\ref{lem:spectral-correct-main}.
  Let $\Sigma_S = \tfrac 1 n \sum_{i \in S} X_i X_i^\top$.
  Then 
  \[
  \Sigma_S \succeq \Paren{1 - \sqrt{C_2 \cdot \tfrac{n-|S|}{n}}} \cdot \Sigma \, .
  \]
\end{claim}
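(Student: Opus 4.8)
The plan is to show that the ``deleted part'' $\Sigma - \Sigma_S = \tfrac1n\sum_{i \notin S} X_i X_i^\top$ is small in the $\Sigma$-(semi)norm, namely that $\Sigma - \Sigma_S \preceq \sqrt{C_2\cdot\tfrac{n-|S|}{n}}\cdot\Sigma$, which is exactly the claimed inequality after rearranging. After the whitening substitution $v = \Sigma^{-1/2}w$ (carried out inside $\mathrm{span}(X_1,\ldots,X_n) = \mathrm{range}(\Sigma)$, which contains every $X_i$, so this is well-defined even if $\Sigma$ is singular), it suffices to prove that for every $w$ with $\iprod{w,\Sigma w} = 1$,
\[
\frac1n \sum_{i \notin S} \iprod{X_i,w}^2 \;\le\; \sqrt{C_2 \cdot \tfrac{n-|S|}{n}} \, .
\]

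To get this, I would apply Cauchy--Schwarz to $\sum_{i\notin S}\iprod{X_i,w}^2 = \sum_{i=1}^n \mathbf 1[i\notin S]\cdot\iprod{X_i,w}^2$, pairing the indicator against the squared inner product:
\[
\sum_{i \notin S}\iprod{X_i,w}^2 \le \Paren{\sum_{i=1}^n \mathbf 1[i\notin S]}^{1/2}\Paren{\sum_{i=1}^n\iprod{X_i,w}^4}^{1/2} = (n-|S|)^{1/2}\Paren{\sum_{i=1}^n\iprod{X_i,w}^4}^{1/2} \, .
\]
Dividing by $n$ and invoking the fourth-moment hypothesis $\tfrac1n\sum_i\iprod{X_i,w}^4 \le C_2\iprod{w,\Sigma w}^2 = C_2$ yields $\tfrac1n\sum_{i\notin S}\iprod{X_i,w}^2 \le \sqrt{C_2\cdot\tfrac{n-|S|}{n}}$, as desired.

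Undoing the substitution, this shows the top eigenvalue of $\Sigma^{-1/2}(\Sigma-\Sigma_S)\Sigma^{-1/2}$ (restricted to $\mathrm{range}(\Sigma)$) is at most $\sqrt{C_2\cdot\tfrac{n-|S|}{n}}$, hence $\Sigma_S = \Sigma - (\Sigma - \Sigma_S) \succeq \Paren{1 - \sqrt{C_2\cdot\tfrac{n-|S|}{n}}}\Sigma$. I do not expect a genuine obstacle here: the argument is Cauchy--Schwarz plus the given moment bound. The only point requiring a little care is the degenerate case where $\Sigma$ is rank-deficient, handled by working throughout in $\mathrm{range}(\Sigma)$, where $\Sigma^{-1/2}$ makes sense and the Loewner ordering is unaffected by the orthogonal complement.
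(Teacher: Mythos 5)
Your argument is correct and is essentially the paper's proof: both write $\sum_{i\notin S}\iprod{X_i,v}^2$ as $\sum_i \mathbf 1[i\notin S]\cdot\iprod{X_i,v}^2$, apply Cauchy--Schwarz to split off $\sqrt{(n-|S|)/n}$, invoke the fourth-moment hypothesis on $C_2$, and subtract from $\iprod{v,\Sigma v}$. The only (immaterial) difference is that you normalize via whitening to $\iprod{w,\Sigma w}=1$, while the paper simply carries the $\iprod{v,\Sigma v}$ factor through.
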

\begin{proof}[Proof of Claim~\ref{clm:covariance-lb}]
  Let $v \in \R^d$.
  We have
  \[
  \frac 1 n \sum_{i \in \overline{S}} \iprod{X_i,v}^2 = \frac 1 n \sum_{i \leq n} 1(i \notin S) \cdot \iprod{X_i,v}^2 \leq \sqrt{\frac 1 n \sum_{i \leq n} 1(i \notin S)} \cdot \sqrt{\frac 1 n \sum_{i \leq n} \iprod{X_i,v}^4}
  \leq \sqrt{\frac{n - |S|}{n}} \cdot \sqrt{C_2} \cdot \iprod{v, \Sigma v} \, ,
  \]
  where the inequality is Cauchy-Schwarz.
  So,
  \[
  \iprod{v, \Sigma_S v} = \frac 1 n \sum_{i \in S} \iprod{X_i,v}^2 = \iprod{v, \Sigma v} - \frac 1 n \sum_{i \in \overline{S}} \iprod{X_i,v}^2 \geq \Paren{1 - \sqrt{C_2 \cdot \frac{n - |S|}{n}}} \cdot \iprod{v, \Sigma v} \, ,
  \]
  which is what we wanted to show.
\end{proof}

\begin{proof}[Proof of Lemma~\ref{lem:spectral-correct-main}]
  Let $\Sigma_S = \tfrac 1 n \sum_{i \in S} X_i X_i^\top$.
  We start by bounding $\|\Sigma_S^{1/2} (\beta_S - \beta)\|^2$:
  \begin{align*}
    \|\Sigma_S^{1/2} (\beta_S - \beta)\|^2 & = \iprod{\beta_S - \beta, \Paren{\frac 1 n \sum_{i \in S} X_i X_i^\top } (\beta_S - \beta) } \\
    & = \iprod{\beta_S - \beta, \frac 1 n \sum_{i \in S} X_i (\iprod{X_i, \beta_S} - y_i) - \frac 1 n \sum_{i \in S} X_i (\iprod{X_i, \beta} - y_i)} \text{ by adding and subtracting $X_i y_i$} \\
    & = \iprod{\beta_S - \beta, - \frac 1 n \sum_{i \in S} X_i (\iprod{X_i, \beta} - y_i)} \text{ since $\beta_S$ minimizes $\sum_{i \in S} (\iprod{X_i, \beta_S} - y_i)^2$.} \\
    & = \iprod{\beta_S - \beta, - \frac 1 n \sum_{i \leq n} X_i (\iprod{X_i, \beta} - y_i) + \frac 1 n \sum_{i \in \overline{S}} X_i (\iprod{X_i, \beta} - y_i) } \\
    & = \iprod{\beta_S - \beta, \frac 1 n \sum_{i \in \overline{S}} X_i (\iprod{X_i, \beta} - y_i)} \text{ since $\beta$ minimizes $\sum_{i \leq n} (\iprod{X_i, \beta} - y_i)^2$.}
  \end{align*}
  The last expression above we can bound via Cauchy-Schwarz.
  It is equal to
  \begin{align*}
  \frac 1 n \sum_{i \leq n} 1(i \notin S) \cdot \iprod{\beta_S - \beta, X_i} \cdot (\iprod{X_i,\beta} - y_i)  & \leq \sqrt{\frac{n-|S|}{n}} \cdot \sqrt{\frac 1 n \sum_{i \leq n} \iprod{\beta_S - \beta,X_i}^2 (\iprod{X_i, \beta} - y_i)^2 } \\
  & \leq \sqrt{\frac{n - |S|}{n}} \cdot \sqrt{C_1} \cdot \sqrt{ \iprod{\beta_S - \beta, \Sigma (\beta_S - \beta)}}\\
  & = \sqrt{\frac{n - |S|}{n}} \cdot \sqrt{C_1} \cdot \|\Sigma^{1/2} (\beta_S - \beta)\| \, ,
  \end{align*}
  where the second inequality uses that $\frac 1 n \sum_{i \leq n} \iprod{X_i,v}^2 (\iprod{X_i,\beta} - y_i)^2 \leq C_1 \cdot \iprod{v, \Sigma v}$ for every $v\in\R^d$.
  Overall, we have obtained
  \begin{align}
      \label{eq:spectral-1}
      \| \Sigma_S^{1/2} (\beta_S - \beta)\|^2 \leq \sqrt{\frac{n-|S|}{n}} \cdot \sqrt{C_1} \cdot \|\Sigma^{1/2} (\beta_S - \beta)\| \, .
  \end{align}
  On the other hand, using Claim~\ref{clm:covariance-lb}, we have
  \begin{align}
  \label{eq:spectral-2}
  \|\Sigma_S^{1/2} (\beta_S - \beta)\|^2 \geq \Paren{ 1 - \sqrt{C_2 \cdot \frac{n - |S|}{n}}} \cdot \|\Sigma^{1/2} (\beta_S - \beta)\|^2 \, 
  \end{align}
  So, putting together \eqref{eq:spectral-1} and \eqref{eq:spectral-2} and dividing both sides by $\left(1 - \sqrt{C_2 \cdot \tfrac{n - |S|}{n}}\right) \cdot \|\Sigma^{1/2} (\beta_S - \beta)\|$,
  \[
  \|\Sigma^{1/2} (\beta_S - \beta)\| \leq \sqrt{\frac{n - |S|}{n}} \cdot \sqrt{C_1} \cdot \frac 1 {1 - \sqrt{C_2 \cdot \frac{n - |S|}{n}}} \, .
  \]
  Finally, the first coordinate of $(\beta_S - \beta)$ is
  \[
  |(\beta_S - \beta)_1| = \left | \iprod{ \Sigma^{-1/2} e_1, \Sigma^{1/2} (\beta_S - \beta)} \right | \leq \|\Sigma^{-1/2} e_1\| \cdot \|\Sigma^{1/2} (\beta_S - \beta)\|
  \]
  where $e_1$ is the first standard basis vector. With $\e = \frac{n - |S|}{n}$, we have obtained
  \[
  |(\beta_S - \beta)_1 | \leq \|\Sigma^{-1/2} e_1\| \cdot \sqrt{\e} \cdot \sqrt{C_1} \cdot \frac{1}{1 - \sqrt{C_2 \e}} \, .
  \]
  Solving for $\e$, we get
  \[
  \e \geq \frac{(\beta_S - \beta)_1^2}{\left(\sqrt{C_1} \|\Sigma^{-1/2} e_1\| + |(\beta_S - \beta)_1|\sqrt{C_2}\right)^2} \, . \qedhere
  \]
\end{proof}

\section{Experiments}\label{sec:experiments}

In this section we discuss the setup of our algorithms, and their results, on a range of case studies. The lower and upper bounds on stability, obtained through our algorithms as well as other existing algorithms, are summarized in Table \ref{tab:table1}. Table \ref{tab:table2} summarizes the run time of the algorithms on each instance. 

\subsection{Experimental setup}\label{ssec:setup}
All our experiments were run on recent commodity laptop hardware (Macbook Air 2022, M2 processor, 24GB RAM), using standard Python libraries (\texttt{numpy}, \texttt{gurobipy}).
We used the following implementations of the algorithms:
\begin{itemize}
    \item[ZAMinfluence \cite{broderick2020automatic}:] Our own implementation, since \cite{broderick2020automatic}'s is implemented in R. See \texttt{auditor\_tools}.
    \item[Greedy Heuristic \cite{kuschnig2021hidden}:] Our own implementation: we find some numerical instability in \cite{moitra2022provably}'s implementation of the greedy heuristic for ill-conditioned or rank-deficient regressions, arising from the use of \texttt{numpy.linalg.inv} for matrix inversions of ill-conditioned matrices, rather than using pseudoinverses via \texttt{numpy.linalg.pinv}.
    \item[\textsc{PartitionAndApprox, NetApprox} \cite{moitra2022provably}:] Implementation provided by \cite{moitra2022provably}.
    We run these algorithms only when we expect both of them to terminate within $5$ minutes. Because their running time scales exponentially with dimension, on our hardware this typically constrains them to $3$ dimensions or fewer.
    \item[Gurobi:] Our implementation, calling the Gurobi mathematical programming solver via \texttt{gurobipy}. We typically cut off the solver after $< 10$ seconds of solving time. Note that total running time is typically 0-5 minutes; in most cases this is dominated by the time to set up the mathematical program in Gurobi.
\end{itemize}

For the Exact 2D Binary, the Exact Difference-in-Differences, and the Spectral algorithms we rely on our own implementation.

\subsection{Results}

\begin{table}[h!]
  \begin{center}
    \caption{Table of lower and upper bounds achieved by each algorithm. Cells left empty correspond to no nontrivial bound having been identified by the algorithm, whereas a dash (--) corresponds to the algorithm not being applicable to a given setting using a reasonable amount of time (e.g., MR22 exceeds our running time limits in high dimensions such as the study by \cite{eubank2022enfranchisement} and our exact algorithms only apply to particular instances). In the right-most column, $n$ denotes the number of samples and $d$ denotes the dimension of the samples, including intercept. (I.e. regression to find a slope and intercept with a single treatment variable has $d=2$.)}
    \label{tab:table1}
    \begin{tabular}{|l|c|c|r r||r r|c|c|}
    \hline
    &
    \multicolumn{1}{c|}{\cite{broderick2020automatic}}
    & \multicolumn{1}{c|}{
    \cite{kuschnig2021hidden}}
    & \multicolumn{2}{c||}{\cite{moitra2022provably}}
    & \multicolumn{2}{c|}{Gurobi}
    & \multicolumn{1}{c|}{Exact}
    & {Spectral}\\
    \hline
    Study/Instance (n,d) & UB  & UB &
    LB & UB &
    LB & UB &
     LB$=$UB&
    LB \\
    \hline
    Bosnia (1195,2) \cite{augsburg2015impacts} & 14 & 13 &
     & 14.8 &
    13 & 13 &
    13 & 3
     \\
    Ethiopia (3113,2) \cite{tarozzi2015impacts} & 1  & 1 &
     & 2 &
    1 & 1 &
    1 &
     \\
    India (6863,2) \cite{banerjee2015miracle} & 6  & 6 &
    4.6 & 5.7 &
    6 & 6 &
    6 & 2
     \\
    Mexico (16560,2) \cite{angelucci2015microcredit} & 1  & 1 &
     & 356 &
    1 & 1 &
    1 &
     \\
    Mongolia (961,2) \cite{attanasio2015impacts} & 16  & 15 &
    13.4 & 19.8 &
    15 & 15 &
    15 & 2
     \\
    Morocco (5498,2) \cite{crepon2015estimating} & 11  & 11 &
    10.4 & 10.5 &
    11 & 11 &
    11 & 2
     \\
    Philippines (1113,2) \cite{karlan2011microcredit} & 9  & 9 &
    7.8 & 9.9 &
    9 & 9 &
    9 &
     \\ \hline
    Min wage (384$\times$2,4) \cite{card1993minimum} & --
     & -- & --
     & -- &
      6 & 10 & 10
     & --
      \\ \hline
    Incarceration (504,48) \cite{eubank2022enfranchisement} & 33
     & 29 & --
     & -- & 
     & 28 & --
     & 
      \\ \hline
    GDP (3895,211)  \cite{martinez2022much}& 136
     & 110 & --
     & -- &  
     & 110  & --
     & 
      \\ \hline
      Synthetic 2D (100,2) & 
     & 63 & 60.2
     & 63 & 63
     & 63  & --
     & 19.5
      \\
      Synthetic 4D (1000,4) & 922 
     & 409 & --
     & 452 & 
     & 409  & --
     & 102
      \\ \hline
    \end{tabular}
  \end{center}
\end{table}

The first seven rows of Table \ref{tab:table1} consider the microcredit studies; as these are based on $X_i\in\{0,1\}$, our Algorithm \ref{alg:binary} obtains optimal results for these. Gurobi also finds optimal results, both for the fractional weights studied by \cite{moitra2022provably} and the integral weights we focus on. In two of the seven settings our results find a smaller set to flip the sign than that identified by ZAMinfluence \cite{broderick2020automatic}. In the other five, our results certify that their upper bound is indeed optimal. The reuslts of \cite{moitra2022provably} on this data do not provide comparably strong bounds, despite taking significantly longer to run, as displayed in the first row of Table \ref{tab:table2}. They only find nontrivial lower bounds on some of the instances and their upper bounds are often far weaker than those identified by ZAMinfluence \cite{broderick2020automatic}. One some runs, their algorithm identifies strong bounds that seemingly contradict the optimal exact solution (e.g., for India); this  reflects the difference in optimization problems, since \cite{moitra2022provably} solves the fractional problem, in which an objective of 8.2 is feasible (as identified by Gurobi, which solves the fractional instance to optimality); ZAMinfluence considers the integral version, for which 9 is the optimal solution (as certified by both Gurobi and our exact algorithm). The spectral algorithm obtains weak lower bounds for only some of these instances.

The next row, Minimum Wage, considers the difference-in-difference setting from Section \ref{ssec:diffs}. Here, we focus on a textbook example of difference-in-difference estimation, specifically \cite{card1993minimum}. We did not implement variants of ZAMinfluence or Moitra and Rohatgi's algorithms for this version of the problem, in which observations have to be dropped in pairs. However, we highlight that Gurobi cannot solve this instance to optimality with a 30-minute time limit, whereas our exact algorithm solves it in less than a second. 

Next, we consider the settings studied in \cite{eubank2022enfranchisement} and \cite{martinez2022much} with Zaminfluence \cite{broderick2020automatic}. Both of these settings are too high-dimensional for our exact algorithms to apply, or those of \cite{moitra2022provably} to converge in reasonable time, yet in both cases Gurobi (and our implementations of ZAMinfluence and the Greedy heuristic) finds significantly smaller subsets than those reported by the respective authors (28 compared to 97 and 2.8\% compared to 5.1\% --- we speculate that the large gap between our influence-based algorithms and those previously used arise from improved numerical stability in our implementations).

Finally, we consider two synthetic datasets.
The Synthetic 2D dataset consists of 100 samples $(X_i,Y_i)$, where $X_i \sim \mathcal{N}(0,1)$ and $Y_i = -2X_i + \epsilon_i$, with $\epsilon_i \sim \mathcal{N}(0,1)$; we consider the regression model $Y = X\beta + \alpha + \epsilon$; i.e. allowing for a fixed-effects/intercept term.
The Synthetic 4D dataset consists of 1000 samples $(X_i,Y_i)$ where $X_i \in \R^4$ has iid coordinates from $\mathcal{N}(0,1)$, and $Y_i = X_i(1) + X_i(2) + X_i(3) + X_i(4) + \epsilon_i$.
We consider the linear model $Y = X(1) \beta_1 + X(2) \beta_2 + X(3) \beta_3 + X(4) \beta_4 + \epsilon$, i.e. without a fixed-effects/intercept term.
The spectral algorithm only produces lower bounds on stability; it is the only algorithm among those we study to produce nontrivial lower bounds for the Synthetic 4D dataset, but performs comparatively poorly on the other datasets.
Gurobi is run with a 60 second cutoff on the synthetic datasets -- 30 seconds allotted to fractional solving, 30 to integer solving.
(Overall runtime is greater than 60 seconds because of the time required for Gurobi to set up the model.)

\begin{table}[h!]
  \begin{center}
    \caption{Algorithmic runtimes in seconds (rounded to the nearest integer and, in most nontrivial cases based on algorithmic parameters). Note that running time for Microcredit studies includes time to solve all 7 studies.}
    \label{tab:table2}
    \begin{tabular}{|l|c|c|r || r|c|c|}
    \hline
    Study/Instance &
    \multicolumn{1}{c|}{\cite{broderick2020automatic}}
    & \multicolumn{1}{c|}{
    \cite{kuschnig2021hidden}}
    & \multicolumn{1}{c||}{\cite{moitra2022provably}}
    & \multicolumn{1}{c|}{Gurobi}
    & \multicolumn{1}{c|}{Exact}
    & {Spectral}\\
    \hline
    Microcredit studies  & 5 & 8 & 3640
     & 
    151 & 0 & 0
     \\ \hline
    Min wage \cite{card1993minimum} & --
     & -- & --
     & 1950 &
     0 & --
     
      \\ \hline
    Incarceration \cite{eubank2022enfranchisement} & 0
     & 1 & --
     &  9 & --
     & 0
      \\ \hline
    GDP \cite{martinez2022much}& 50
     & 122 & --
     & 243 & --
     & 0
      \\ \hline
      Synthetic 2D & 0 
     & 0 & 25
     & 0  & --
     & 0
      \\
      Synthetic 4D & 2 
     & 7 &
    40 (no LB) & 61 & --
     & 0 
      \\ \hline
    \end{tabular}
  \end{center}
\end{table}

\paragraph{Boston Housing Data. }
As discussed above, Moitra and Rohatgi evaluate their algorithms on the well-known Boston housing dataset \cite{moitra2022provably}. For the 156 instances they consider, we display the results (lower and upper bounds) in Figure \ref{fig:bh}. To ensure a fair comparison, we set the parameters affecting the runtime for Gurobi and the \cite{moitra2022provably} algorithms so that they run in approximately the same time; in particular, for all these instances combined Gurobi took about 8 minutes whereas the \cite{moitra2022provably} algorithms took about 12 minutes. 

We first compare the upper bounds, comparing ours with the the ones identified by the Net-algorithm in \cite{moitra2022provably} and the ones identified through ZAMinfluence with resolving \cite{broderick2020automatic,kuschnig2021hidden}, as implemented by \cite{moitra2022provably}. Here we find that the Net algorithm of \cite{moitra2022provably} usually identifies similarly strong upper bounds to Gurobi. Though Gurobi identifies tighter upper bounds on about 99\% of instances, the difference is smaller than 5 on 99\% of instances. In contrast, ZAMinfluence (with or without resolving with either implementation) never identifies a tighter upper bound than Gurobi and is off by at least 20 on about 20\% of instances. Next, in Figure \ref{fig:bh2} we compare the 
lower bounds identified by \cite{moitra2022provably} and by Gurobi, noticing that Gurobi identifies stronger bounds on 93\% of the instances. Finally, in Figure \ref{fig:bh3} we plot the resulting optimality gaps across all instances. This comparison shows that Gurobi obtains tight bounds (within 1\%) on 92\% of the instances, and obtains a lower bound of at least 35\% of its upper bound on all instances. In contrast, \cite{moitra2022provably} does not obtain tight bounds (within 1\%) on 92\% of the instances and obtains a lower bound of at least 20\% of the upper bound on just 85\% of the instances.

\begin{figure}
    \centering
    \begin{subfigure}[t]{\linewidth}
        \centering
        \includegraphics[width=.5\linewidth]{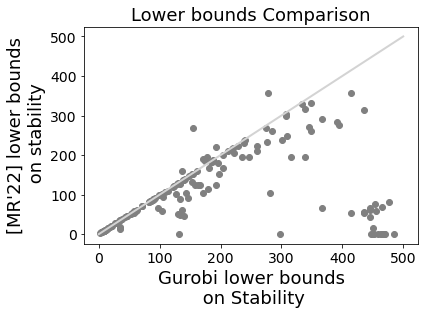} 
        \caption{Comparison of upper bounds} \label{fig:bh1}
    \end{subfigure}
    \begin{subfigure}[t]{\linewidth}
        \centering
 \includegraphics[width=.5\linewidth]{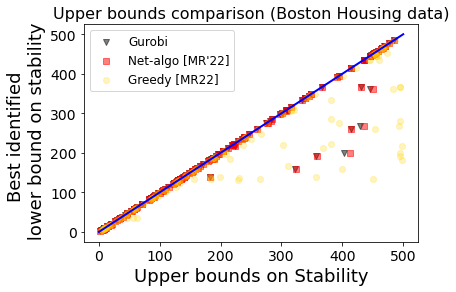} 
        \caption{Comparison of lower bounds} \label{fig:bh2}
    \end{subfigure}
    \begin{subfigure}[t]{.5\linewidth}
        \centering
    \includegraphics[width=1\linewidth]{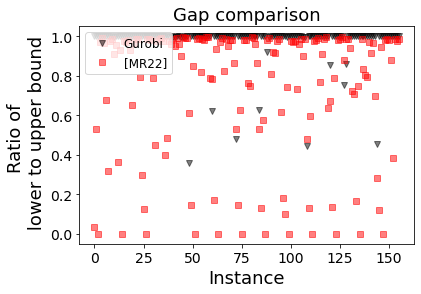} 
        \caption{Comparison of resulting gaps} \label{fig:bh3}
    \end{subfigure}
    \caption{The three plots in this are based on the Boston housing data, as analyzed by \cite{moitra2022provably}. Plot (a) compares the upper bounds obtained in ZAMinfluence with the ones in \cite{moitra2022provably} and ones obtained by Gurobi. Plot (b) compares the lower bounds of the latter two, and Plot (c) compares the resulting optimality gaps.}\label{fig:bh}
\end{figure}

\subsection{Challenge Data}
In our accompanying replication package, we provide \texttt{csv} files for all the datasets above.
Three are designated as \emph{challenge datasets}: Synthetic 4D (\texttt{synthetic4d.csv}), Incarceration (\texttt{Eubank\_black\_perc.csv}), and GDP (\texttt{martinez.csv}).
As described in Table~\ref{tab:table1}, all our methods leave wide gaps between upper and lower bounds on these datasets. In particular, for Incarceration and GDP, we cannot identify any nontrivial lower bounds.
We believe that progress towards closing these gaps requires new algorithms that would constitute substantial steps toward practical  robustness auditing.

\section*{Acknowledgements}
SBH was funded by NSF award no. 2238080 as well as MLA@CSAIL. DF thanks Jacquelyn Pless, Jose Blanchet, Vasilis Syrgkanis, and Rahul Mazumder for insightful conversations.
SBH thanks Nati Srebro for helpful discussions. 

\bibliographystyle{amsalpha}
\bibliography{refs}

\end{document}